\newenvironment{reminder}[1]{\bigskip
	\noindent {\bf Reminder of #1  }\em}{\smallskip}
\newtheorem{theorem}{Theorem}[section]
\newtheorem{corollary}{Corollary}[section]
\newtheorem{definition}{Definition}[section]
\newtheorem{lemma}{Lemma}[section]
\let\c@fconjecture\c@conjecture
\let\c@fconj\c@conj
\title{Super Strong ETH is False for Random $k$-SAT}
\author{Nikhil Vyas\footnote{\texttt{nikhilv@mit.edu}}\\MIT}
\date{}
\begin{document}
\maketitle
\begin{abstract}
It has been hypothesized that $k$-SAT is hard to solve for randomly chosen instances near the ``critical threshold'', where the clause-to-variable ratio is $2^k \ln 2-\theta(1)$. Feige's hypothesis~\cite{Feige02} for $k$-SAT says that for all sufficiently large clause-to-variable ratios, random $k$-SAT cannot be refuted in polynomial time. It has also been hypothesized that the worst-case $k$-SAT problem cannot be solved in $2^{n(1-\omega_k(1)/k)}$ time, as multiple known algorithmic paradigms (backtracking, local search and the polynomial method) only yield an $2^{n(1-1/O(k))}$ time algorithm. This hypothesis has been called the ``Super-Strong ETH'', modeled after the ETH and the Strong ETH.

Our main result is a randomized algorithm which refutes the Super-Strong ETH for the case of random $k$-SAT, for any clause-to-variable ratio. Given any random $k$-SAT instance $F$ with $n$ variables and $m$ clauses, our algorithm decides satisfiability for $F$ in $2^{n(1-\Omega(\log k)/k)}$ time, with high probability. It turns out that a well-known algorithm from the literature on SAT algorithms does the job: the PPZ algorithm of Paturi, Pudlak, and Zane (1998). 
\end{abstract}

\section{Introduction} The $k$-SAT problem is the canonical $NP$-complete problem for $k \geq 3$. No subexponential algorithms for $k$-SAT are known: in particular, no algorithms are known that achieve $2^{o(n)}$ time on $n$-variable $k$-SAT instances. Furthermore, despite much effort, no $(2-\epsilon)^n$-time algorithms for some constant $\epsilon > 0$ are known for $k$-SAT. The inability to find algorithms led researchers to the following two popular hypotheses:
\begin{compactitem}
\item \textbf{Exponential Time Hypothesis (ETH)}~\cite{eth} There is an $\alpha > 0$ such that no $3$-SAT algorithm runs in $2^{\alpha n}$ time.
\item 
\textbf{Strong Exponential Time hypothesis (SETH)}~\cite{seth} There does not exist a constant $\epsilon > 0$ such that for all $k$, $k$-SAT can be solved in  $(2-\epsilon)^n$ time.
\end{compactitem}

The current best known algorithms for $k$-SAT all have runtime $2^{n\left(1-O\left(\frac{1}{k}\right)\right)}$. This bound is achieved by multiple paradigms, such as randomized backtracking~\cite{ppz,ppsz}, local search~\cite{schoning}, and the polynomial method~\cite{ryan-ov}. Even for simpler variants such as unique-k-SAT, no faster algorithms are known. Hence it is possible that this runtime of $2^{n\left(1-O\left(\frac{1}{k}\right)\right)}$ is actually optimal. This has been termed the Super-Strong ETH~\cite{super-strong-eth}. 
 
 \textbf{Super-SETH: Super Strong exponential time hypothesis.} Super-SETH states that there are no $2^{n\left(1-\omega\left(\frac{1}{k}\right)\right)}$ algorithms for $k$-SAT.

There are related questions of a) finding solutions of a random $k$-SAT instance where each clause is drawn uniformly and independently from the set of all possible clauses and of b) finding solutions of a planted $k$-SAT instance where first a random (and hidden) solution $\sigma$ is sampled and then each clause is drawn uniformly and independently from the set of all possible clauses satisfying $\sigma$.  Random $k$-SAT displays a threshold behaviour in which, for $\alpha_{sat} = 2^k\ln{2}-\theta(1)$ and for all constants $\epsilon > 0$, the $k$-SAT instances are satisfiable with high probability when $m < (\alpha_{sat}-\epsilon)n$  and unsatisfiable with high probability when $m > (\alpha_{sat}+\epsilon)n$. Note that as far as decidability is concerned for instances below the threshold we can simply output satisfiable and above the threshold output unsatisfiable and we will be correct with high probability. It has been conjectured~\cite{cook, hard2} that random instances at the threshold $m = \alpha_{sat}n$ are the hardest random instances and it is hard to decide whether they are satisfiable or not. We are motivated by the following strengthening of this conjecture: Are random instances at the threshold as hard as the worst case instances of $k$-SAT?

\subsection{Prior Work}
As mentioned earlier there have been many algorithms for worst case $k$-SAT but none of them run in time $2^{n(1-\omega(1/k))}$. There also has been a lot of work on polynomial time algorithms for random $k$-SAT which return a solution for small $m$ below the threshold. Note that even though we know that these instances are satisfiable whp (with high probability) that does not immediately give us a way to find a solution.   
Chao and Franco\cite{chao} first proved that the unit clause heuristic(which is the same as PPZ) finds solutions with high probability for random $k$-SAT when $m \leq c2^{k}n/k$ for some constant $c > 0$. The current best known polynomial time algorithm in this regime is by Coja-Oghlan~\cite{better} and can find a solution with high probability for random $k$-SAT when $m \leq c2^{k}n\log{k}/k$ for some constant $c > 0$. Interestingly, we also know of polynomial time algorithms for large $m$. Specifically, it is known that for a certain constant $C_0 = C(k)$ and $m > C_0n$ there are polynomial time algorithms which find a solution for planted $k$-SAT by Krivelevich and Vilenchik~\cite{expected} and random $k$-SAT (conditioned on satisfiability) by Coja-Oghlan, Krivelevich and Vilenchik~\cite{almost}. But both of these results require at least $m > 4^kn/k$~\cite{dan}. To our knowledge, no improvements over worst case $k$-SAT algorithms are known for random $k$-SAT around the threshold. 

\subsection{Our Results}
We present an algorithm which breaks Super-Strong ETH for random $k$-SAT. In particular, we give a $2^{n\left(1-\Omega\left(\frac{\log{k}}{k}\right)\right)}$-time algorithm which finds a solution whp for random-k-SAT (conditioned on satisfiability) for all values of $m$. 
In fact, our algorithm is an old one from the SAT algorithms literature: the PPZ algorithm of Paturi, Pudlak and Zane~\cite{ppz}.

In order to show that PPZ breaks Super-Strong ETH in the random case, we first show that PPZ yields a faster algorithm for random \emph{planted} $k$-SAT for large enough $m$.

\begin{theorem} \label{thm:planted-algo}
There is a randomized algorithm that, given a planted $k$-SAT instance $F$ sampled from $P(n, k, m)$\footnote{See ``Three $k$-SAT Distributions" in Section~\ref{sec:pre} for formal definitions of different $k$-SAT distributions.} with $m > 2^{k-1}\ln(2)$, outputs a satisfying assignment to $F$ in $2^{n\left(1-\Omega\left(\frac{\log{k}}{k}\right)\right)}$ time with $1-2^{-\Omega \left(n\left(\frac{\log{k}}{k}\right)\right)}$ probability (over the planted $k$-SAT distribution and the randomness of the algorithm).
\end{theorem}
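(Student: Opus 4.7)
The approach is to run the PPZ algorithm $T = 2^{n(1 - \Omega(\log k/k))}$ times independently and show it recovers the planted assignment with the claimed probability; since each PPZ trial takes polynomial time, this suffices. Fix the planted satisfying assignment $\sigma$. The standard PPZ analysis lower-bounds the per-trial probability of outputting $\sigma$, over the internal random permutation $\pi$ and random bit string $y$, by $\mathbb{E}_\pi[2^{F(\pi,F) - n}]$, where $F(\pi, F)$ counts the \emph{forced} variables: those $v$ for which, during the PPZ execution aiming at $\sigma$, unit propagation fires before $v$ is set by $y$. Explicitly, $v$ is forced iff $F$ contains a clause $C$ with (i) $v$'s literal in $C$ is $\sigma$-satisfied, (ii) the other $k-1$ literals are $\sigma$-falsified, and (iii) the other $k-1$ variables all precede $v$ in $\pi$. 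Everything reduces to showing $F(\pi, F) = \Omega(n\log k/k)$ with high probability.

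For the expectation: fix $\pi$ and a variable $v$ at normalized position $x = p_v/n$. A clause drawn from $P(n,k,m)$ is critical for $v$ (conditions (i)--(ii)) with probability $k/(n(2^k-1))$, and conditional on that, its other variables are uniform on $[n]\setminus\{v\}$, so condition (iii) holds with probability $\binom{p_v-1}{k-1}/\binom{n-1}{k-1} \approx x^{k-1}$. Since the $m$ clauses of $F$ are i.i.d.,
\[
\Pr\bigl[v \text{ forced}\bigr] \;\approx\; 1 - \exp\bigl(-\alpha_k\, x^{k-1}\bigr), \qquad \alpha_k := \frac{mk}{n(2^k - 1)}.
\]
The hypothesis $m \geq 2^{k-1} \ln(2) \cdot n$ gives $\alpha_k \geq \tfrac{k\ln 2}{2} = \Theta(k)$. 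Integrating over $x \in [0,1]$, the integrand is $\Omega(1)$ whenever $x \geq (1/\alpha_k)^{1/(k-1)} = 1 - \Theta(\log k/k)$, yielding $\mathbb{E}_{\pi, F}[F(\pi, F)] = \Omega(n \log k / k)$.

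The technically delicate step is turning this expectation into a $1 - 2^{-\Omega(n\log k/k)}$ tail bound. Crucially, each planted clause has a unique $\sigma$-satisfied literal with probability $1/(2^k-1)$ and can therefore be critical for at most one variable; letting $Y_i \in [n] \cup \{\bot\}$ denote the variable (if any) that clause $C_i$ forces under the fixed $\pi$, the $Y_i$'s are i.i.d.\ and the indicators $X_v := \mathbb{1}[v \in \{Y_i\}]$ are negatively associated (the standard balls-into-bins NA result). Hence $\mathrm{Var}(F(\pi, F)) \leq \mathbb{E}\,F(\pi, F) = O(n \log k/k)$, and Bernstein's inequality gives $\Pr_F[F(\pi, F) \leq \tfrac{1}{2}\mathbb{E}\,F(\pi, F)] \leq 2^{-\Omega(n \log k/k)}$ for each fixed $\pi$. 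Since this holds uniformly in $\pi$, applying Markov to the bad-$\pi$ fraction (viewed as a random variable in $F$) shows that for all but a $2^{-\Omega(n\log k/k)}$ fraction of planted formulas, a uniformly random $\pi$ satisfies $F(\pi, F) \geq \Omega(n\log k/k)$ with probability $1 - 2^{-\Omega(n\log k/k)}$. For such $F$ the per-trial PPZ success probability is $\geq 2^{-n(1 - \Omega(\log k/k))}$, so $T = 2^{n(1 - c'\log k/k)}$ independent trials with any sufficiently small $c'$ drive the overall failure probability below $2^{-\Omega(n\log k/k)}$, within the claimed runtime. The main obstacle is this concentration step: naive McDiarmid treating each clause as a coordinate only yields $\exp(-t^2/m)$, which is far too weak in the regime $m \gtrsim 2^k n$, so one must exploit the negative-association structure rather than generic bounded differences.
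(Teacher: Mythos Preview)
Your argument is correct and reaches the same conclusion via a somewhat heavier route than the paper. Two differences are worth noting. First, the paper observes that $P(n,k,m)$ is invariant under variable permutations, so one may fix $\pi$ to the identity and analyze ``Simple-PPZ'' (process $x_1,\dots,x_n$ in order); this removes the need for your Markov step swapping the $F$- and $\pi$-quantifiers. Second, for concentration the paper avoids negative association and Bernstein entirely: it restricts attention to the last $|L| = n\ln k/(2k)$ variables (precisely the region where your integrand is already $\Omega(1)$) and bounds the probability that fewer than $|L|/250$ of them are good by a direct union bound---choose which small subset $S\subset L$ is good, and note that ``all of $L\setminus S$ are bad'' means none of the $m$ i.i.d.\ clauses makes any variable in $L\setminus S$ good, an event of probability at most $(1-\Omega(\ln k/(k2^k)))^m\leq 2^{-\Omega(n\log k/k)}$ since the events ``clause $C$ makes $x_i$ good'' are disjoint across $i$. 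Your balls-into-bins/NA framework is more systematic and would transfer to settings where this disjointness fails; the paper's argument is more elementary and self-contained. (A minor slip: the probability a planted clause has a \emph{unique} $\sigma$-satisfied literal is $k/(2^k-1)$, not $1/(2^k-1)$; this is inconsequential since you only use that each clause is critical for at most one variable.)
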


Next, we give a reduction from random $k$-SAT (conditioned on satisfiability, denoted by $R^+$) to planted $k$-SAT. Similar reductions/equivalences have been observed before in~\cite{eli, achlioptas-algorithmic}.

\begin{theorem}\label{thm:reduction} Suppose there exists an algorithm $A$ for planted k-SAT $P(n, k, m)$, for some $m\geq2^k\ln{2}(1-f(k)/2)n$, which finds a solution in time $2^{n(1-f(k))}$ and with probability $1-2^{-nf(k)}$, where $1/k < f(k) = o_{k}(1)$. Then given a random $k$-SAT instance sampled from $R^{+}(n, k, m)$, we can find a satisfiable solution in $2^{n(1-\Omega(f(k)))}$ time with $1-2^{-n\Omega(f(k))}$ probability.
\end{theorem}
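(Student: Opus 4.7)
My plan is to run $A$ directly on $F\sim R^+(n,k,m)$; the runtime matches $A$'s, so all the content lies in bounding the success probability.

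The central step will be a pointwise density comparison. Writing $Z(F)$ for the number of satisfying assignments of $F$ and unwinding the two definitions (recalling that $P$ picks a uniform $\sigma\in\{0,1\}^n$ and then $m$ clauses uniformly among the $(2^k-1)\binom{n}{k}$ clauses satisfying $\sigma$, while $R^+$ picks $m$ clauses uniformly from the $2^k\binom{n}{k}$ possible clauses and conditions on satisfiability) I will verify that for every satisfiable $F$,
\[
\frac{\Pr_{R^+}[F]}{\Pr_P[F]} \;=\; \frac{\mathbb{E}_R[Z]}{Z(F)\cdot\Pr_R[sat]} \;=\; \frac{\mathbb{E}_{R^+}[Z]}{Z(F)},
\]
where the second equality uses $\mathbb{E}_R[Z]=\Pr_R[sat]\cdot\mathbb{E}_{R^+}[Z]$ (holding since $Z=0$ off the satisfiable set). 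The key observation is that $Z(F)\geq 1$ for every $F$ in the support of $R^+$, so this ratio is uniformly bounded by $\mathbb{E}_{R^+}[Z]$.

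Next I will control $\mathbb{E}_{R^+}[Z]$. A direct first-moment calculation gives $\mathbb{E}_R[Z]=2^n(1-2^{-k})^m\leq 2^{nf(k)/2\,(1+o_k(1))}$ in the stated range of $m$; the $o_k(1)$ correction (from the expansion $\ln(1-2^{-k})\approx -2^{-k}$) is negligible compared to $f(k)$ since $f(k)>1/k\gg 2^{-k}$. Moreover this $m$ lies below the satisfiability threshold $\alpha_{sat}\,n=(2^k\ln 2-O(1))n$ by a gap of $\Omega(2^k n/k)$, so $\Pr_R[sat]\geq 1/2$ and hence $\mathbb{E}_{R^+}[Z]\leq 2^{nf(k)/2\,(1+o_k(1))}$. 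Fixing $A$'s coins and summing over $F$,
\begin{align*}
\Pr_{R^+}[A\text{ fails}] \;&\leq\; \mathbb{E}_{R^+}[Z]\cdot\Pr_P[A\text{ fails}] \\
&\leq\; 2^{nf(k)/2\,(1+o_k(1))}\cdot 2^{-nf(k)} \;=\; 2^{-n\Omega(f(k))},
\end{align*}
and averaging over $A$'s coins preserves the bound.

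I do not anticipate a serious obstacle: the whole argument rests on the trivial identity $Z(F)\geq 1$ combined with the first-moment estimate for $\mathbb{E}_R[Z]$, and never requires a (much harder) second-moment or concentration statement. The only point requiring a little care is confirming $\Pr_R[sat]\geq 1/2$ in the given regime, which is automatic from $f(k)>1/k$ and the known location of $\alpha_{sat}$.
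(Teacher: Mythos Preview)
Your core reduction is correct and is exactly the paper's argument: your density identity $\Pr_{R^+}[F]/\Pr_P[F]=\mathbb{E}_{R^+}[Z]/Z(F)$ together with $Z(F)\ge 1$ gives $\Pr_{R^+}[A\text{ fails}]\le \mathbb{E}_{R^+}[Z]\cdot\Pr_P[A\text{ fails}]$, which is identical to the paper's union bound over $\sigma$ (the paper writes $s=\mathbb{E}_{R^+}[Z]$ and uses that $R^+$ conditioned on $\sigma$ satisfying $F$ is $P(n,k,m,\sigma)$).

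The gap is in your control of $\mathbb{E}_{R^+}[Z]$. You assert that ``this $m$ lies below the satisfiability threshold'', but the hypothesis is $m\ge 2^k\ln 2\,(1-f(k)/2)\,n$, a \emph{lower} bound on $m$; nothing prevents $m$ from being well above $\alpha_{sat}n$. In that regime $\Pr_R[\text{sat}]$ is exponentially small, so your step $\mathbb{E}_{R^+}[Z]\le 2\,\mathbb{E}_R[Z]$ collapses, and in fact $\mathbb{E}_R[Z]=2^n(1-2^{-k})^m$ is far smaller than $\mathbb{E}_{R^+}[Z]$. The paper handles this by invoking, in addition to the first-moment bound you use (its Lemma~\ref{lem:expectedRplus}), a separate estimate (Lemma~\ref{lem:expectedRplus2}) showing $\mathbb{E}_{R^+}[Z]\le 2^{O(n/2^k)}$ for all $m\ge(\alpha_{sat}-1)n$; this is obtained by upper-bounding $\mathbb{E}_{R^+}[Z]$ by $\mathbb{E}_{P}[Z]$ and computing the latter via a Hamming-distance decomposition around the planted assignment. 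Since $f(k)>1/k\gg 2^{-k}$, this still yields $\mathbb{E}_{R^+}[Z]\le 2^{nf(k)/2}$ and your final inequality goes through. So your plan is right in spirit but incomplete: you still need the above-threshold bound on $\mathbb{E}_{R^+}[Z]$, which is not a first-moment triviality.
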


Together, Theorems~\ref{thm:planted-algo} and \ref{thm:reduction}  imply the existence of a  $2^{n\left(1-\Omega\left(\frac{\log{k}}{k}\right)\right)}$ 
algorithm(See Theorem~\ref{thm:random-algo}) for finding solutions of random $k$-SAT (conditioned on satisfiability) whp for all $m$ and in particular this allows us to decide satisfiability for instance of random $k$-SAT at the threshold whp in the same running time. This can mean that either the random instances of k-SAT at the threshold are not the hardest instances of k-SAT or Super-Strong ETH is also false for worst case k-SAT. For PPZ algorithm $2^{n(1-O(\frac{1}{k}))}$ lower bounds~\cite{ppszlb} have been proven so now we know that with respect to the PPZ algorithm random $k$-SAT instances behave differently from worst case $k$-SAT instances.
On the other hand for PPSZ, the current best known algorithm for $k$-SAT (for $k \geq 4$) we only know $2^{n\left(1-O\left(\frac{\log{k}}{k}\right)\right)}$ lower bounds~\cite{ppszlb}, which matches our upper bounds, hence it is possible that PPSZ itself has $2^{n\left(1-\Omega\left(\frac{\log{k}}{k}\right)\right)}$ running time for worst case $k$-SAT. 

Finally in Section~\ref{sec:plantedlarge}, we see that our techniques can be used to get faster than $2^{n\left(1-\Omega\left(\frac{\log{k}}{k}\right)\right)}$ algorithms for planted $k$-SAT and random $k$-SAT (conditioned on satisfiability) when $m$ is large.

\section{Preliminaries}\label{sec:pre}
\paragraph*{Notation}
In this paper, we generally assume $k$ is a large enough constant. Our time bounds have the form $2^{n(1-\Omega(\log k)/k)}$, and we are beating $2^{n(1-O(1/k))}$ time; such notation really only makes sense for $k$ that can grow unboundedly.
We often use the terms ``solution'', ``SAT assignment'', and ``satisfying assignment'' interchangeably. The notation $x \in_r \chi$ denotes that $x$ is randomly sampled from the distribution $\chi$. By $poly(n)$ we mean some function $f(n)$ which satisfies $f(n) = O(n^c)$ for a fixed constant $c$. Letting $n$ be the number of variables in a $k$-CNF, a random event about them holds  \emph{whp} (with high probability) if it holds with probability $1-f(n)$ where $f(n) \rightarrow 0$ as $n \rightarrow \infty$. By $\log$ and $\ln$ we denote the logarithm function base $2$ and $e$ respectively. $H(p) = -p\log(p)-(1-p)\log(1-p)$ denotes the binary entropy function. $\tilde{O}(f(n))$ denotes $O(f(n)\log(f(n)))$

\paragraph*{Three $k$-SAT Distributions} In this paper we consider the following three distributions for $k$-SAT:
\begin{compactitem}
\item $R(n, k, m)$ is the distribution over formulas $F$ of $m$ clauses, where each clause is drawn i.i.d. from the set of all $k$-width clauses. This is the standard $k$-SAT distribution.
\item $R^{+}(n, k, m)$ is the distribution over formulas $F$ of $m$ clauses where each clause is drawn i.i.d. from the set of all $k$-width clauses and we condition $F$ on being satisfiable i.e. $R(n, k, m)$ conditioned on satisfiability.
\item $P(n, k, m, \sigma)$ is the distribution over formulas $F$ of $m$ clauses where each clause is drawn i.i.d. from the set of all $k$-width clauses which satisfy $\sigma$. $P(n, k, m)$ is the distribution over formulas $F$ formed by sampling $\sigma \in \{0, 1\}^n$ uniformly and then sampling $F$ from $P(n, k, m, \sigma)$.
\end{compactitem}

Note that an algorithm solving the search problem (finding SAT assignments) for instances sampled from $R^+$ is stronger than deciding satisfiability for instances sampled from $R$: given an algorithm for the search problem on $R^+$, we can run it on a random instance from $R$ and return SAT if and only if the algorithm returns a valid satisfying assignment.

\subsection{Structural properties of planted and random \texorpdfstring{$k$}{k}-SAT}
We will now state a few structural results about planted and random $k$-SAT which will be useful in proving the runtime and correctness of our algorithms. We prove some lemmas that bound the expected number of solutions of a planted $k$-SAT instance and a random $k$-SAT instance (conditioned on satisfiability).

The satisfiability conjecture for $k$-SAT states that satisfiability of random $k$-SAT displays a threshold behaviour for all $k$. The following lemma which states that it is true for all $k$ bigger than a fixed constant was proven by Ding, Sly and Sun\cite{k-sat-conjecture}.

\begin{lemma}[\cite{k-sat-conjecture}]\label{lem:k-sat-conj}
There exists a constant $k_0$ such that for all $k > k_0$, for $\alpha_{sat} = 2^k\ln{2}-\theta(1)$ and for all constant $\epsilon > 0$, we have that:
\begin{align*}
    \text{For }m < (1-\epsilon)\alpha_{sat}n, \lim_{n \to \infty} \Pr_{F \in_r R(n, k, m)}[\text{F is satisfiable}] =& 1\\
    \text{For }m > (1+\epsilon)\alpha_{sat}n, \lim_{n \to \infty} \Pr_{F \in_r R(n, k, m)}[\text{F is satisfiable}] =& 0
\end{align*}
\end{lemma}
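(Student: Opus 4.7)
The plan is to attack the two directions separately. The upper bound (unsatisfiability when $m > (1+\epsilon)\alpha_{sat} n$) is the easier half: let $Z$ denote the number of satisfying assignments of $F \in_r R(n,k,m)$. Since clauses are drawn i.i.d., each fixed assignment is satisfied by a single clause with probability $1 - 2^{-k}$, so $\mathbb{E}[Z] = 2^n (1-2^{-k})^m$. Setting this tendency to zero identifies the first-moment threshold at $m/n = 2^k \ln 2$, and Markov's inequality yields $\Pr[Z \geq 1] \to 0$ for any $m/n$ exceeding this by a positive constant. This recovers an upper bound on $\alpha_{sat}$ of $2^k \ln 2$, consistent with the stated $2^k \ln 2 - \theta(1)$.

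The substantive work lies in the lower bound: showing $\Pr[Z \geq 1] \to 1$ when $m/n < (1-\epsilon)\alpha_{sat}$. A naive second moment computation on $Z$ itself fails, because pairs of solutions at small Hamming distance contribute disproportionately, reflecting the well-known ``clustering'' (shattering) geometry of the solution space as $m/n$ approaches the threshold. The fix, following the Ding--Sly--Sun approach, is to apply the second moment method to a carefully weighted count $\tilde Z$ of solutions, where the weight of an assignment $\sigma$ is a product of local factors on variables and clauses derived from the Bethe / one-step replica-symmetry-breaking (1RSB) free-energy functional of the random $k$-SAT factor graph. The weighting suppresses the contribution from atypical overlaps, and if calibrated correctly one proves $\mathbb{E}[\tilde Z^2] = O(\mathbb{E}[\tilde Z]^2)$, which by Paley--Zygmund gives $\Pr[\tilde Z > 0] = \Omega(1)$ and hence $\Pr[Z > 0] = \Omega(1)$ at a clause density matching $\alpha_{sat}$ to within lower-order terms.

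To pass from $\Omega(1)$ probability to probability tending to $1$, I would invoke the Friedgut--Bourgain sharp threshold theorem for monotone properties on product spaces (satisfiability is monotone-decreasing in the clause set), which boosts any positive-probability lower bound into a $1-o(1)$ bound on the satisfiable side of a sharp window, and simultaneously forces the transition location $\alpha_{sat}$ to be well-defined. Combining this with a matching upper bound on $n^{-1}\mathbb{E}[\log Z]$ obtained by the interpolation method (which gives a variational bound in terms of the 1RSB functional) identifies the sharp threshold exactly as $\alpha_{sat} = 2^k \ln 2 - \theta(1)$ for all sufficiently large $k$.

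The principal obstacle is the weighted second moment itself: one must analyze a high-dimensional variational problem over distributions of overlaps between pairs of solutions and verify that the dominant contribution comes from a unique ``trivial'' overlap point, which in turn reduces to delicate convexity and spectral estimates for an auxiliary stochastic fixed-point (belief-propagation) recursion on Poisson Galton--Watson trees that approximate the local structure of the random factor graph. The other ingredients---first moment, invocation of sharp threshold, and the interpolation upper bound---are comparatively standard, but tuning the 1RSB weight function so that its second moment genuinely concentrates is the technical heart of the argument and the reason the result was resolved only for large $k$.
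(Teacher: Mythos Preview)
The paper does not prove this lemma: it is stated as a black-box citation of Ding--Sly--Sun and no proof is given in the text. There is therefore nothing in the paper to compare your proposal against.

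That said, as a summary of the cited result your outline is broadly faithful. The first-moment/Markov argument gives the unsatisfiability side at density $2^k\ln 2$, Friedgut's sharp-threshold theorem is indeed what upgrades a constant probability of satisfiability to $1-o(1)$, and the interpolation bound supplies the matching upper bound on the threshold location. One point of precision: Ding--Sly--Sun do not literally run a weighted second moment on satisfying assignments; rather they perform a (first and second) moment computation on the number of \emph{clusters} of solutions, encoded via a coarsening to ``frozen'' configurations predicted by the 1RSB picture. The variational/overlap analysis you describe as the ``principal obstacle'' is real, but it plays out at the level of these cluster-encodings rather than a reweighting of $Z$ itself. For the purposes of this paper none of that matters---the lemma is simply quoted---so your writeup, while informative, goes well beyond what the paper supplies.
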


We will also need the fact that whp the ratio of the number of solutions and its expected value is not too small as long as $m$ is not too large. This was proven by Achlioptas~\cite{achlioptas-algorithmic}.

\begin{lemma}[Lemma 22 in\cite{achlioptas-algorithmic}] \label{lem:expectedR}
For $F\in_rR(n, k, m)$, let $\mathcal{S}$ be the set of solutions of $F$. Then $E[|\mathcal{S}|] = 2^n(1-\frac{1}{2^k})^m$. Furthermore, for $\alpha_d = 2^k\ln{2}-k$ and $m < \alpha_dn$ we have that $$\lim_{n \to \infty} \Pr[|\mathcal{S}| < E[|\mathcal{S}|]/2^{O(nk/2^k)}] = 0$$.
\end{lemma}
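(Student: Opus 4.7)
The first identity is a direct linearity-of-expectation computation: each fixed assignment $\sigma \in \{0,1\}^n$ falsifies a uniformly random $k$-clause with probability exactly $2^{-k}$ (exactly one of the $2^k$ sign patterns over the chosen $k$ variables falsifies every literal), so independence across the $m$ clauses gives $\Pr[\sigma \in \mathcal{S}] = (1-2^{-k})^m$ and summing over all $2^n$ assignments yields $E[|\mathcal{S}|] = 2^n(1-2^{-k})^m$. For the concentration statement my plan is the standard second-moment argument paired with a sharp-threshold bootstrap.

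First I would expand $E[|\mathcal{S}|^2] = \sum_{\sigma,\tau}\Pr[\sigma,\tau \in \mathcal{S}]$ and sort pairs by Hamming distance $d = \alpha n$. A short inclusion-exclusion on the ``both falsify'' event shows that a single random clause is satisfied by both $\sigma$ and $\tau$ with probability
\[
q_k(\alpha) \ =\ 1 - 2 \cdot 2^{-k} + 2^{-k}\cdot \frac{\binom{(1-\alpha)n}{k}}{\binom{n}{k}} \ \approx\ 1 - 2^{-k+1} + 2^{-k}(1-\alpha)^k,
\]
because both falsify iff the $k$ chosen variables all lie in the agreement set of $\sigma,\tau$ and the signs match their common values. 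Hence $E[|\mathcal{S}|^2] = 2^n\sum_d \binom{n}{d}\, q_k(d/n)^m$, and dividing through by $E[|\mathcal{S}|]^2 = 2^{2n}(1-2^{-k})^{2m}$ gives
\[
\frac{E[|\mathcal{S}|^2]}{E[|\mathcal{S}|]^2} \ =\ \sum_d \binom{n}{d}\, 2^{-n}\left(\frac{q_k(d/n)}{(1-2^{-k})^2}\right)^m.
\]

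Next I would bound this sum by $\poly(n)\cdot 2^{n\max_\alpha g(\alpha)}$, where $g(\alpha) = H(\alpha) - 1 + (m/n)\log_2\!\bigl(q_k(\alpha)/(1-2^{-k})^2\bigr)$. The crucial observation is that $q_k(1/2) = (1-2^{-k})^2$ identically, so the log factor vanishes there and $g(1/2) = 0$; a second-order Taylor expansion around $\alpha = 1/2$ then shows the local maximum is only $O(k^2/4^k)$. Using the hypothesis $m < (2^k\ln 2 - k)n$ together with $\log_2(1-2^{-k}) \approx -2^{-k}/\ln 2$, I would check that at the endpoints $\alpha \to 0,1$ one gets $g \le -\Omega(k/2^k)$ (this is precisely where the $-k$ slack in $\alpha_d$ makes the expression negative), and that the intermediate regime is dominated by these cases via monotonicity of $q_k$. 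Putting the estimates together yields $E[|\mathcal{S}|^2]/E[|\mathcal{S}|]^2 \le 2^{O(nk/2^k)}$.

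Paley--Zygmund then gives a constant lower bound on $\Pr\!\bigl[|\mathcal{S}| \ge E[|\mathcal{S}|]/2^{cnk/2^k}\bigr]$. To upgrade this to $1-o(1)$ I would appeal to Friedgut's sharp-threshold theorem applied to the monotone decreasing event $\{|\mathcal{S}| \ge T\}$: a constant probability at $m$ clauses combined with sharpness forces the probability to $1-o(1)$ once $m$ is shrunk by a $1-o(1)$ factor, and the gap between $\alpha_d = 2^k\ln 2 - k$ and the satisfiability threshold $2^k\ln 2$ absorbs exactly this loss. The main obstacle is the calculus step bounding $\max_\alpha g(\alpha)$: one must Taylor-expand carefully around $\alpha = 1/2$ (where the leading terms cancel so the answer depends on subleading $4^{-k}$ contributions), and at the same time handle the $\alpha \to 0$ regime where $H(\alpha) \to 0$ but $H'(\alpha) \to \infty$, verifying that the $-k$ slack in the hypothesis keeps $g$ below $O(k/2^k)$ throughout.
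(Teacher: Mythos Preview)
The paper does not prove this lemma at all: it is quoted verbatim as Lemma~22 of \cite{achlioptas-algorithmic} and used as a black box, so there is no in-paper argument to compare against. Your proposal has to be read as a standalone sketch of the cited result.

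Your expectation identity and second-moment setup are correct, and $g(\alpha)=H(\alpha)-1+(m/n)\log_2\bigl(q_k(\alpha)/(1-2^{-k})^2\bigr)$ is the right exponent to study. The gap is at the Paley--Zygmund step. With a second-moment ratio $E[|\mathcal S|^2]/E[|\mathcal S|]^2\le 2^{O(nk/2^k)}$---and for the \emph{unweighted} count on $k$-SAT the ratio really is $2^{\Theta(n\,g_{\max})}$ with $g_{\max}>0$, since $g'(1/2)<0$ forces a strictly positive local maximum just to the left of $1/2$---Paley--Zygmund yields only
\[
\Pr\bigl[|\mathcal S|\ge\theta\,E[|\mathcal S|]\bigr]\ \ge\ (1-\theta)^2\cdot\frac{E[|\mathcal S|]^2}{E[|\mathcal S|^2]}\ \ge\ 2^{-O(nk/2^k)},
\]
which is exponentially small in $n$, not a constant. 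No choice of $\theta$ repairs this, because $(1-\theta)^2\le 1$. An exponentially small lower bound cannot be fed into Friedgut's sharp-threshold theorem in the way you describe: sharpness only says the probability traverses $[\eps,1-\eps]$ in a window of width $o(m)$, so a bound of size $2^{-\Theta(n/2^k)}$ is entirely consistent with being on the $o(1)$ side of that window, and you cannot conclude $1-o(1)$ at a nearby $m$.

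What the cited proof actually needs is a genuinely constant lower bound \emph{before} boosting. For $k$-SAT this is precisely where the naive second moment fails and the Achlioptas--Peres weighted second moment enters: one replaces $|\mathcal S|$ by a reweighted count $Z_w$ (penalising assignments by the number of true literals per clause) whose second-moment ratio is $O(1)$ up to density $2^k\ln 2-O(k)$; Paley--Zygmund then gives $\Pr[Z_w\ge c\,E[Z_w]]\ge\Omega(1)$, and only afterward does a sharp-threshold/sprinkling argument upgrade this to $1-o(1)$ and transfer back to $|\mathcal S|$. Your outline has the right scaffolding but omits exactly this weighting step, and without it the ``constant lower bound'' you claim from Paley--Zygmund does not follow.
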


\begin{lemma}\label{lem:expectedRplus}
For $F \in_r R^+(n, k, m)$ let $Z$ denote the number of solutions of $F$. Then for any constant $\delta > 0$, when $m < (1-\epsilon)\alpha_{sat}$ for some constant $\epsilon > 0$, then $2^n(1-\frac{1}{2^k})^m \leq E[Z] \leq (1+\delta)2^n(1-\frac{1}{2^k})^m$. Furthermore, for $\alpha_d=2^k\ln{2}-k$, and for $m < \alpha_dn$ we have that
$$\lim_{n \to \infty} \Pr[Z < E[Z]/2^{O(nk/2^k)}] = 0$$
\end{lemma}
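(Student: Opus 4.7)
The plan is to reduce both statements to the corresponding statements for the unconditioned distribution $R(n,k,m)$ via a simple Bayes' rule calculation, using that $R^+$ is by definition $R$ conditioned on the (high-probability) event of satisfiability.

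For the expectation bound, observe that the number of solutions $Z$ satisfies $Z \geq 1$ iff $F$ is satisfiable, so $Z \cdot \mathbb{1}[F \text{ sat}] = Z$ identically under $R$. Hence
\[
E_{R^+}[Z] \;=\; \frac{E_R[Z \cdot \mathbb{1}[F \text{ sat}]]}{\Pr_R[F \text{ sat}]} \;=\; \frac{E_R[Z]}{\Pr_R[F \text{ sat}]}.
\]
By Lemma~\ref{lem:expectedR}, $E_R[Z] = 2^n(1-1/2^k)^m$. Since $m < (1-\epsilon)\alpha_{sat} n$, Lemma~\ref{lem:k-sat-conj} gives $\Pr_R[F \text{ sat}] \to 1$ as $n \to \infty$, so for any $\delta > 0$ we have $\Pr_R[F \text{ sat}] \geq 1/(1+\delta)$ for all sufficiently large $n$. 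The denominator is at most $1$, which gives the lower bound $E_{R^+}[Z] \geq 2^n(1-1/2^k)^m$, and the lower bound on $\Pr_R[F \text{ sat}]$ gives the matching upper bound $E_{R^+}[Z] \leq (1+\delta)2^n(1-1/2^k)^m$.

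For the concentration statement, I would again pass the event to $R$. For any event $A$,
\[
\Pr_{R^+}[A] \;=\; \Pr_R[A \mid F \text{ sat}] \;\leq\; \frac{\Pr_R[A]}{\Pr_R[F \text{ sat}]}.
\]
Take $A$ to be the event that $Z < E_{R^+}[Z]/2^{O(nk/2^k)}$. Since the expectation bound above shows $E_{R^+}[Z] \leq (1+\delta) E_R[Z]$ and a constant factor is absorbed by the $O(nk/2^k)$ in the exponent (adjusting the hidden constant), this event is contained in the event $\{Z < E_R[Z]/2^{O(nk/2^k)}\}$. The condition $m < \alpha_d n$ with $\alpha_d = 2^k \ln 2 - k$ lies strictly below $(1-\epsilon)\alpha_{sat} n$ for a suitable small $\epsilon$ and all sufficiently large $k$, so $\Pr_R[F \text{ sat}] \to 1$ and Lemma~\ref{lem:expectedR} applies, giving $\Pr_R[Z < E_R[Z]/2^{O(nk/2^k)}] \to 0$. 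Dividing by a quantity that tends to $1$ preserves the limit, so $\Pr_{R^+}[A] \to 0$ as well.

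The proof is essentially a two-line Bayes calculation followed by invocation of Lemmas~\ref{lem:k-sat-conj} and \ref{lem:expectedR}; the only mildly delicate point is checking that $\alpha_d n$ lies in the range where Lemma~\ref{lem:k-sat-conj} guarantees high satisfiability probability, which holds because $\alpha_{sat} - \alpha_d = \theta(k)$ is positive (and in fact very small compared to $\alpha_{sat}$) for large $k$. There is no real obstacle beyond keeping the two conditions on $m$ compatible.
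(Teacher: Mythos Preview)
Your proposal is correct and follows essentially the same approach as the paper: both arguments relate $E_{R^+}[Z]$ to $E_R[Z]$ via the identity $E_R[Z] = \Pr_R[F \text{ sat}]\cdot E_{R^+}[Z]$, invoke Lemma~\ref{lem:k-sat-conj} for $\Pr_R[F\text{ sat}]\to 1$, and then push the concentration event back to $R$ and apply Lemma~\ref{lem:expectedR}. The only cosmetic difference is that the paper uses the pointwise inequality $\Pr_{R^+}[Z<c]\le \Pr_R[Z'<c]$ (valid since conditioning on $Z'>0$ can only raise $Z'$) rather than your $\Pr_{R^+}[A]\le \Pr_R[A]/\Pr_R[F\text{ sat}]$, but both routes are equivalent once $\Pr_R[F\text{ sat}]\to 1$.
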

\begin{proof}
Let $F' \in_r R(n, k, m)$ and let $Z'$ denote the number of solutions of $F'$. Let $p_n$ denote that probability that $F'$ is unsatisfiable, then $E[Z'] = (1-p_n)E[Z]$. By Lemma~\ref{lem:k-sat-conj} $\lim_{n \to \infty} p_n \to 0$, hence $2^n(1-\frac{1}{2^k})^m \leq E[Z] \leq (1+\delta)2^n(1-\frac{1}{2^k})^m$.

$\Pr[Z < E[Z]/2^{O(nk/2^k)}] \leq \Pr[Z' < E[Z]/2^{O(nk/2^k)}]$ as $Z$ is just $Z'$ conditioned on being positive. = $\Pr[Z' < E[Z]/2^{O(nk/2^k)}] \leq \Pr[Z' < E[Z']/2^{O(nk/2^k)}]$ as $E[Z] \leq 2E[Z']$. $\Pr[Z' < E[Z']/2^{O(nk/2^k)}]$ tends to $0$ by Lemma~\ref{lem:expectedR}.
\end{proof}

We will use our planted k-SAT algorithm to solve random k-SAT instances conditioned on their satisfiability. This approach was introduced in an unpublished manuscript by Ben-Sasson, Bilu, and Gutfreund ~\cite{eli}. We will use the following lemma therein.

\begin{lemma}[Lemma 3.3 from \cite{eli}] \label{lem:biased} For a given $F$ in $R^+(n, k, m)$ with $Z$ solutions, it is sampled from $P(n, k, m)$ with probability $Zp$, where $p$ only depends on $n$, $k$, and $m$.
\end{lemma}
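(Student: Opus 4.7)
The plan is to expand the probability of sampling a specific formula $F$ under $P(n,k,m)$ by conditioning on the planted assignment $\sigma$, and to show by a symmetry argument that only $Z=Z(F)$ varies from formula to formula.

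Concretely, I would first write
\[
\Pr_{P(n,k,m)}[F] \;=\; \sum_{\sigma\in\{0,1\}^n} \Pr[\sigma] \cdot \Pr_{P(n,k,m,\sigma)}[F] \;=\; 2^{-n}\sum_{\sigma\in\{0,1\}^n}\Pr_{P(n,k,m,\sigma)}[F],
\]
since $\sigma$ is sampled uniformly. The next step is to observe that the conditional probability $\Pr_{P(n,k,m,\sigma)}[F]$ vanishes unless $\sigma$ satisfies every clause of $F$, i.e., unless $\sigma$ is one of the $Z$ satisfying assignments of $F$; in that case the clauses of $F$ are drawn i.i.d. uniformly from the set of $k$-clauses satisfying $\sigma$, giving $\Pr_{P(n,k,m,\sigma)}[F] = (1/M_\sigma)^m$, where $M_\sigma$ denotes the number of $k$-clauses satisfied by $\sigma$.

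The key point is that $M_\sigma$ is the same for every $\sigma$: by symmetry (or a direct count, $M_\sigma = (2^k-1)\binom{n}{k}$, since for each choice of $k$ distinct variables exactly one of the $2^k$ sign patterns is falsified by $\sigma$). Hence $M_\sigma$ depends only on $n$ and $k$; call it $M$. Plugging in gives
\[
\Pr_{P(n,k,m)}[F] \;=\; 2^{-n}\cdot Z \cdot M^{-m},
\]
which is exactly $Z\cdot p$ with $p := 2^{-n} M^{-m}$, a quantity depending only on $n,k,m$.

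There is no real obstacle: the lemma is essentially a bookkeeping identity, and the only substantive ingredient is the symmetry observation that the number of clauses compatible with a fixed assignment is independent of the assignment. A remark I would include is that $F$ need not be explicitly restricted to satisfiable formulas on the right-hand side, since $Z=0$ exactly when $F$ is unsatisfiable, making the formula $\Pr_P[F]=Zp$ correctly assign probability $0$ to unsatisfiable $F$, consistent with the fact that $P(n,k,m)$ is supported on satisfiable formulas.
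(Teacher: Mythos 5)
The paper does not prove this lemma internally; it cites it directly as Lemma 3.3 from~\cite{eli}, so there is no in-paper proof to compare against. Your proof is correct and is the natural/standard argument: condition on the planted assignment $\sigma$, note that $\Pr_{P(n,k,m,\sigma)}[F]=M_\sigma^{-m}$ when $\sigma$ satisfies $F$ and $0$ otherwise, and use the symmetry observation that $M_\sigma=(2^k-1)\binom{n}{k}$ is independent of $\sigma$, yielding $\Pr_{P(n,k,m)}[F]=2^{-n}Z\,M^{-m}=Zp$. One thing worth making explicit, since it is load-bearing for the claim that $p$ depends only on $n,k,m$: the formula $F$ must be treated as an ordered tuple of $m$ i.i.d.\ clauses (as the paper's phrasing suggests), for otherwise a multiset interpretation would introduce a multinomial factor depending on the clause multiplicities in $F$. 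With that convention fixed, your argument and the final closing remark about $Z=0$ for unsatisfiable $F$ are both sound.
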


\begin{corollary}\label{cor:planted-concentration} For $F \in_r R^+(n, k, m)$ and $F' \in_r P(n, k, m)$  let $Z$ and $Z'$ denote their number of solutions respectively. Then for $\alpha_d=2^k\ln{2}-k$ and for $m < \alpha_dn$, $\lim_{n \to \infty} \Pr[Z' < E[Z]/2^{O(nk/2^k)}] = 0$.
\end{corollary}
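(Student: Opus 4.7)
The plan is to exploit the fact that $P(n,k,m)$ is precisely the size-biasing of $R^+(n,k,m)$ with respect to the number of satisfying assignments $Z$. Starting from Lemma~\ref{lem:biased}, namely $\Pr_P[F]=Z(F)\,p$ for some constant $p=p(n,k,m)$, together with the fact that $R^+$ is uniform over satisfiable $m$-tuples of clauses (since $R$ is), a normalization calculation gives $p = 1/\bigl(N^m\,\Pr_R[\mathrm{sat}]\,E_{R^+}[Z]\bigr)$, where $N$ is the number of $k$-width clauses. Combining these one-line facts yields the general size-bias identity
$$
\Pr_{F'\sim P}[F' \in A] \;=\; \frac{E_{F\sim R^+}\!\bigl[Z(F)\cdot \mathbf{1}_A(F)\bigr]}{E_{F\sim R^+}[Z(F)]},
$$
valid for every event $A$ on satisfiable formulas.

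Next I would apply this identity with $A = \{Z<T\}$ for the threshold $T := E[Z]/2^{O(nk/2^k)}$ appearing in the corollary. Using $Z\cdot\mathbf{1}_A \le T$ pointwise to bound the numerator, this gives
$$
\Pr_{F'\sim P}\bigl[Z(F') < T\bigr] \;\le\; \frac{T\,\Pr_{R^+}[Z<T]}{E[Z]} \;=\; \frac{\Pr_{R^+}[Z<T]}{2^{O(nk/2^k)}}.
$$
The prefactor $2^{-O(nk/2^k)}$ already tends to $0$, so the claim follows. Furthermore, since $m<\alpha_d n$, Lemma~\ref{lem:expectedRplus} forces $\Pr_{R^+}[Z<T]\to 0$ as well, giving an even stronger bound; this is precisely why the hypothesis $m<\alpha_d n$ is imposed.

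There is essentially no hard step here: the whole argument is a Markov-type inequality on a size-biased measure. The only place to be slightly careful is the first paragraph's normalization: Lemma~\ref{lem:biased} only asserts the size-bias identity up to an unspecified $p$, and one must pin $p$ down in terms of $E_{R^+}[Z]$ so that the $E[Z]$ in the hypothesis on $T$ cancels cleanly against the $E[Z]$ in the denominator of the size-bias formula. Once this is done, the passage from Lemma~\ref{lem:biased} to the stated concentration on $Z'$ is immediate.
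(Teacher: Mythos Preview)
Your argument is correct and rests on the same idea as the paper's: both proofs use Lemma~\ref{lem:biased} to recognize $P(n,k,m)$ as the $Z$-size-biased version of $R^+(n,k,m)$. The paper then finishes by (implicit) stochastic domination---since size-biasing can only shift mass toward larger $Z$, one has $\Pr_P[Z'<T]\le \Pr_{R^+}[Z<T]$, and the right-hand side tends to $0$ by Lemma~\ref{lem:expectedRplus}. You instead bound the numerator of the size-bias identity by $T\cdot\Pr_{R^+}[Z<T]\le T$, obtaining $\Pr_P[Z'<T]\le T/E[Z]=2^{-\Omega(nk/2^k)}\to 0$. Both inequalities are one-liners; your route has the mild bonus that, as you observe, it does not actually need the hypothesis $m<\alpha_d n$ (nor Lemma~\ref{lem:expectedRplus}) to conclude, whereas the paper's stochastic-domination step does.
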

\begin{proof}
We have $\lim_{n \to \infty} \Pr[Z < E[Z]/2^{O(nk/2^k)}] = 0$ by Lemma~\ref{lem:expectedRplus}. Lemma~\ref{lem:biased} shows that the planted $k$-SAT distribution $P(n,k,m)$ is biased toward satisfiable formulas with more solutions. The distribution $R^+(n, k, m)$ instead chooses all satisfiable formulas with equal probability. Hence $\lim_{n \to \infty} \Pr[Z' < E[Z]/2^{O(nk/2^k)}] = 0$.
\end{proof}

Previous lemmas regarding the number of solutions do not apply when $m > \alpha_{sat}n$. Next we prove a lemma which bounds the number of expected solutions when $m > \alpha_{sat}n$ and this may be of independent interest.

\begin{lemma}\label{lem:expectedRplus2}
The expected number of solutions of $F \in_r R^+(n, k, m)$ and $F' \in_r P(n, k, m)$ for $m \geq (\alpha_{sat}-1)n$ are $\leq 2^{O(n/2^k)}$.
\end{lemma}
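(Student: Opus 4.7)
The plan is to first bound $E_P[Z']$ under the planted distribution by a direct computation, then transfer the bound to $E_{R^+}[Z]$ via Cauchy--Schwarz. By the symmetry of $P(n,k,m)$, conditioning on the planted assignment $\sigma$ yields
\[ E_P[Z'] = \sum_{\tau \in \{0,1\}^n} q(d(\sigma,\tau))^m = \sum_{d=0}^n \binom{n}{d} q(d)^m, \]
where $q(d)$ is the probability that a random $k$-clause conditioned on satisfying $\sigma$ also satisfies a fixed $\tau$ at Hamming distance $d$. A short calculation gives $q(d) = p(d)/p(0)$ with $p(d) := 1 - (2-(1-d/n)^k)/2^k$ and $p(0) = 1 - 1/2^k$, which simplifies to $q(d) = 1 - (1-(1-\alpha)^k)/(2^k-1)$ for $\alpha := d/n$. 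Using $\binom{n}{d} \leq 2^{nH(\alpha)}$, $1-u \leq e^{-u}$, and the hypothesis $m/n \geq \alpha_{sat}-1 = 2^k \ln 2 - O(1)$ yields
\[ \binom{n}{d}\, q(d)^m \;\leq\; 2^{n\, g(\alpha) + O(n/2^k)}, \qquad g(\alpha) := H(\alpha) + (1-\alpha)^k - 1. \]

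The crucial analytic step is showing $\max_{\alpha \in [0,1]} g(\alpha) = O(1/2^k)$. For $\alpha \geq 1/2$ this is immediate from $H(\alpha) \leq 1$ and $(1-\alpha)^k \leq 2^{-k}$, so $g(\alpha) \leq 2^{-k}$. For $\alpha < 1/2$, the derivative $g'(\alpha) = \log((1-\alpha)/\alpha) - k(1-\alpha)^{k-1}$ reveals two local maxima: one near $\alpha \approx 2^{-k}$, where $H(2^{-k}) \approx (k+1/\ln 2)/2^k$ and $1-(1-2^{-k})^k \approx k/2^k$ nearly cancel so that $g \approx 1/(2^k \ln 2)$, and one near $\alpha \approx 1/2 - k\ln 2/2^{k+1}$, analyzed by Taylor-expanding $H$ around $1/2$ with $H''(1/2) = -4/\ln 2$ to obtain $g \approx 1/2^k$; in the intermediate region $g$ is negative. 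Summing the per-term bound over $d \in \{0,\ldots,n\}$ absorbs a polynomial prefactor (since $n/2^k = \Omega(n)$ for constant $k$), giving $E_P[Z'] \leq 2^{O(n/2^k)}$.

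To transfer the bound to $R^+$, I invoke Lemma~\ref{lem:biased}, which implies $E_P[Z'] = (\sum_{f \text{ sat}} Z(f)^2)/(\sum_{f \text{ sat}} Z(f))$, while by definition $E_{R^+}[Z] = (\sum_{f \text{ sat}} Z(f))/S$, where $S$ is the number of satisfiable $m$-clause formulas. Cauchy--Schwarz applied to $(Z(f))_{f \text{ sat}}$ against the constant sequence $(1)_{f \text{ sat}}$ yields $(\sum Z(f))^2 \leq S \sum Z(f)^2$, which rearranges to $E_{R^+}[Z] \leq E_P[Z'] \leq 2^{O(n/2^k)}$, completing the proof. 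The main obstacle is the uniform bound $g(\alpha) = O(1/2^k)$: both local maxima (near $\alpha \approx 2^{-k}$ and $\alpha \approx 1/2$) are simultaneously tight at order $1/2^k$, so each region requires its own careful Taylor expansion; the summation over $d$ and the Cauchy--Schwarz transfer are otherwise routine.
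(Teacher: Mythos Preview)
Your proof is correct and follows essentially the same approach as the paper: both compute $E_P[Z']$ by summing over Hamming shells, reduce to bounding $g(\alpha)=H(\alpha)+(1-\alpha)^k-1$, and show $\max_\alpha g(\alpha)=O(1/2^k)$ by locating the critical points near $\alpha\approx 2^{-k}$ and $\alpha\approx 1/2$ (the paper also notes the intermediate critical point at $\theta(\log k/k)$, which is the local minimum you describe as ``$g$ negative in the intermediate region''). The one minor difference is that for the transfer to $R^+$ the paper simply invokes Lemma~\ref{lem:biased} to say $P$ is a size-biased version of $R^+$ and hence $E_{R^+}[Z]\le E_P[Z']$, whereas you spell out this same inequality via Cauchy--Schwarz; both arguments yield the identical bound.
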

\begin{proof}
Lemma~\ref{lem:biased} shows that the planted $k$-SAT distribution $P(n,k,m)$ is biased toward satisfiable formulas with more solutions. Hence expected number of solutions of $F' \in_r P(n, k, m)$ upper bounds the expected number of solutions of $F \in_r R^+(n, k, m)$, so it suffices to upper bound expected number of solutions of $F'$.
Let $Z$ denote the number of solutions of $F'$. Let $\sigma$ denote the planted solution in $F$ and let $x$ be some assignment which at hamming distance $i$ from $\sigma$. For a clause $C$ which satisfies $\sigma$ but does not satisfy $x$ all the satisfying literals must come from the $i$ bits where $\sigma$ and $x$ differ and all the unsatisfying literals must come from the remaining $n-i$ bits. Let $j$ denote the number of satisfying literals in $C$ then probability that a randomly sampled clause $C$ which satisfies $\sigma$ and does not satisfy $x$ is $= \sum_{j=1}^{k} \frac{\binom{k}{j}}{2^k-1}(\frac{i}{n})^j(1-\frac{i}{n})^{k-j} = \frac{1-(1-\frac{i}{n})^k}{2^k-1}$. We will now upper bound $E[Z]$.

\begin{align*}
    E[Z] &= \sum\limits_{y \in \{0, 1\}^n} \Pr[\text{$y$  satisfies $F'$}]\\
    &= \sum\limits_{i = 1}^{n} \binom{n}{i} \Pr[\text{Assignment $x$ that differs from $\sigma$ in $i$ bits  satisfies $F'$}]\\
    &= \sum\limits_{i = 1}^{n} \binom{n}{i} \Pr[\text{A random clause satisfying $\sigma$ satisfies $x$}]^m\\
    &= \sum\limits_{i = 1}^{n} \binom{n}{i} (1-\Pr[\text{A random clause satisfying $\sigma$ does not satisfy $x$}])^m\\
    &= \sum\limits_{i = 1}^{n} \binom{n}{i} \left(1-\frac{1-\left(1-i/n\right)^k}{2^k-1}\right)^m\\
    &\leq \sum\limits_{i = 1}^{n} \binom{n}{i} e^{-m\left(\frac{1-\left(1-i/n\right)^k}{2^k-1}\right)} \hspace{10pt} \text{[As $1-x \leq e^{-x}$]}\\
    &\leq \sum\limits_{i = 1}^{n} \binom{n}{i} e^{-(\alpha_{sat}-1)n\left(\frac{1-\left(1-i/n\right)^k}{2^k-1}\right)}\\
    &\leq 2^{O(n/2^k)}\sum\limits_{i = 1}^{n} \binom{n}{i} e^{-((2^k-1)\ln{2})n\left(\frac{1-\left(1-i/n\right)^k}{2^k-1}\right)}\hspace{10pt} \text{[As $m \geq (2^{k}\ln{2}-O(1))n$]}\\
    &\leq 2^{O(n/2^k)}\sum\limits_{i = 1}^{n} \binom{n}{i} 2^{-n\left(1-\left(1-i/n\right)^k\right)}\\
    &\leq 2^{O(n/2^k)}\sum\limits_{i = 1}^{n} 2^{n\left(H(i/n)-1+\left(1-i/n\right)^k\right)}\\
    &\leq 2^{O(n/2^k)}\max\limits_{0 \leq p \leq 1} 2^{n\left(H(p)-1+\left(1-p\right)^k\right)}\\
\end{align*}
Let $f(p) = H(p)-1+\left(1-p\right)^k$. Then $f'(p) = -\log\left(\frac{p}{1-p}\right)-k(1-p)^{k-1}$ and $f''(p) = \frac{-1}{p(1-p)}+k(k-1)(1-p)^{k-2}$. $f''(p) = 0 \iff p(1-p)^{k-1} = \frac{1}{k(k-1)}$. Note that $f''(p)$ has only 2 roots in $[0, 1]$, hence $f'(p)$ has at most $3$ roots in $[0, 1]$. It can be verified that for large enough $k$, $f'(p)$ indeed has 3 roots at $p = \theta(1/2^k), \theta(\log{k}/k), 1/2-\theta(k/2^k)$. At all these 3 values of $p$, $f(p) = O(1/2^k)$. Hence $E[Z] \leq 2^{O(n/2^k)}$.
\end{proof}

\section{Planted k-SAT and the PPZ Algorithm}\label{sec:planted}

In this section, we establish that the PPZ algorithm solves random planted $k$-SAT instances faster than $2^{n-n/O(k)}$ time:

\begin{reminder}{Theorem~\ref{thm:planted-algo}} There is a randomized algorithm that given a planted $k$-SAT instance $F$ sampled from $P(n, k, m)$ with $m > 2^{k-1}\ln(2)$, outputs a satisfying assignment to $F$ in $2^{n\left(1-\Omega\left(\frac{\log{k}}{k}\right)\right)}$ time with $1-2^{-\Omega \left(n\left(\frac{\log{k}}{k}\right)\right)}$ probability (over the planted $k$-SAT distribution and the randomness of the algorithm).
\end{reminder}

We will actually prove the following stronger claim: For any $\sigma$, if $F$ was sampled from $P(n, k, m, \sigma)$, then we will find a set of $2^{n\left(1-\Omega\left(\frac{\log{k}}{k}\right)\right)}$ assignments in $2^{n\left(1-\Omega\left(\frac{\log{k}}{k}\right)\right)}$ time and with probability $1-2^{-\Omega \left(n\left(\frac{\log{k}}{k}\right)\right)}$ one of them will be $\sigma$ (the probability is over the planted $k$-SAT distribution and the randomness of the algorithm). Note that the above theorem statement implies an algorithm that (always) finds a solution for $k$-SAT instance $F$ sampled from $P(n, k, m)$ and runs in \emph{expected} time  $2^{n\left(1-\Omega(\frac{\log{k}}{k})\right)}$.

In fact, the algorithm of Theorem~\ref{thm:planted-algo} is a slightly modified version of the PPZ algorithm~\cite{ppz}, a well-known worst case algorithm for $k$-SAT. PPZ runs in polynomial time, and outputs a SAT assignment (on any satisfiable $k$-CNF) with probability $p \geq 2^{-n+n/O(k)}$. It can be repeatedly run for $\tilde{O}(1/p)$ times to obtain a worst-case algorithm that is correct whp. We consider a simplified version of the algorithm which is sufficient for analyzing planted $k$-SAT:

\begin{algorithm}[H]
\caption{Algorithm for planted $k$-SAT}\label{simple-ppz}
\begin{algorithmic}[1]
\Procedure{Simple-PPZ}{$F$}
\While {$i \leq n$}
\If {there exists a unit clause}
\State set the variable in it to make it true
\ElsIf {$x_i$ is unassigned}
\State Set $x_i$ randomly.
\State $i \gets i+1$
\Else \State $i \gets i+1$
\EndIf
\EndWhile
\State Output the assignment if it satisfies $F$.
\EndProcedure
\end{algorithmic}
\end{algorithm}

Our Simple-PPZ algorithm (Algorithm~\ref{simple-ppz}) only differs from PPZ in that the PPZ algorithm also performs an initial random permutation of variables. For us, a random permutation is unnecessary: a random permutation of the variables in the planted $k$-SAT distribution yields the same distribution of instances. That is, the original PPZ algorithm would have the same behavior as Simple-PPZ.

We will start with a few useful definitions.

\begin{definition}[\cite{ppz}] A clause $C$ is \emph{critical} with respect to variable $x$ and a satisfying assignment $\sigma$ if $x$ is the only variable in $C$ whose corresponding literal is satisfied by $\sigma$.
\end{definition}

\begin{definition} A variable $x_i$ in $F$ is \emph{good} for an assignment $\sigma$ if there exists a clause $C$ in $F$ which is critical with respect to $x$ and $\sigma$, and $i$ is largest index among all variables in $C$. A variable which is not good is called \emph{bad}.
\end{definition}

Observe that for every good variable $x_i$, if all variables $x_j$ for $j < i$ are assigned correctly with respect to $\sigma$, then Simple-PPZ will set $x_i$ to the correct value, due to the unit clause rule. 
As such, for a formula $F$ with $z$ good variables with respect to $\sigma$, the probability that Simple-PPZ finds $\sigma$ is at least $2^{-(n-z)}$: if all $n-z$ bad variables are correctly assigned, the algorithm is forced to set all good variables correctly as well. Next, we prove a high-probability lower bound on the number of good variables in a random planted $k$-SAT instance.

\begin{lemma}\label{lem:good}
A planted $k$-SAT instance $F$, sampled from $P(n, k, m, \sigma)$ with $m > n2^{k-1}\ln{2}$ has at least $\Omega(n\log{k}/k)$ good variables with probability $1-2^{-\Omega\left(\frac{n\log{k}}{k}\right)}$ with respect to the assignment $\sigma$.
\end{lemma}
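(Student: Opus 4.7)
The plan is to reduce the count of good variables to the count of distinct ``witnesses'' across the $m$ independent clauses. For each clause $C_j$ drawn from $P(n,k,m,\sigma)$, define the witness $W_j \in \{\perp\}\cup[n]$ by $W_j = i$ if $C_j$ is critical with respect to $x_i$ and $\sigma$ \emph{and} $x_i$ has the largest index of any variable in $C_j$, and $W_j = \perp$ otherwise. Whenever $W_j = i$, the variable $x_i$ is good by definition, so the number of good variables is at least $|\{W_j : W_j\neq\perp\}|$. The $C_j$ are i.i.d., so the $W_j$ are i.i.d.\ as well, and counting over the $\binom{n}{k}(2^k-1)$ equally likely clauses satisfying $\sigma$ gives $\Pr[W_j=i] = \binom{i-1}{k-1}/\binom{n}{k}\cdot 1/(2^k-1)$ for each $i\in\{k,\dots,n\}$.

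I would then focus on the top window $R := \{i : i > \lceil n(1-c\log k/k)\rceil\}$ for a small constant $c\in(0,1/2)$ (say $c=1/4$), so that $|R| = \Theta(n\log k/k)$. Elementary falling-factorial estimates give $\binom{v-1}{k-1}/\binom{n}{k} = \Omega\bigl((k/n)\cdot(1-c\log k/k)^{k-1}\bigr) = \Omega(k^{1-2c}/n)$ for every $v \in R$, so using $m > n\,2^{k-1}\ln 2$ yields $m\cdot\Pr[W_j=v] = \Omega(k^{1-2c}) = \omega(1)$. Letting $I_v := \mathbb{1}[\exists j: W_j = v]$, independence of the $W_j$ gives $\Pr[I_v = 0] \leq \exp(-m\Pr[W_j = v]) \leq 1/2$ for all sufficiently large $k$, hence $E\bigl[\sum_{v\in R} I_v\bigr] \geq |R|/2 = \Omega(n\log k/k)$.

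The main obstacle is getting concentration at the right scale. Naively applying McDiarmid's inequality to ``number of good variables'' as an $O(1)$-bounded-difference function of the $m \approx n\,2^k$ independent clauses only gives concentration at scale $\sqrt{m} = \Theta(\sqrt{n\,2^k})$, which is exponentially larger than the target $n\log k/k$. Instead I would use negative association: for each fixed $j$ the vector $(\mathbb{1}[W_j = v])_{v\in\{\perp\}\cup[n]}$ is a one-hot vector, so by Joag-Dev--Proschan its entries are negatively associated; NA is preserved under the independence of the $W_j$ and under taking monotone functions of disjoint coordinates, so $\{I_v\}_{v\in R}$ is itself NA (since $I_v$ is an increasing function of the disjoint slice $\{\mathbb{1}[W_j=v]\}_j$). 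The multiplicative Chernoff bound for bounded NA variables then yields $\Pr\bigl[\sum_{v\in R} I_v \leq E[\sum I_v]/2\bigr] \leq \exp\bigl(-\Omega(E[\sum I_v])\bigr) = 2^{-\Omega(n\log k/k)}$, finishing the proof. The key conceptual step is recognizing that negative association rather than bounded differences is the right concentration tool, and the most technical piece is the uniform-in-$v$ falling-factorial lower bound.
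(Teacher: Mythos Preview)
Your argument is correct. The setup is identical to the paper's: both restrict attention to the top window of $\Theta(n\log k/k)$ largest-index variables, both compute $\Pr[W_j=i]=\binom{i-1}{k-1}\big/\bigl(\binom{n}{k}(2^k-1)\bigr)$, and both use $m>n2^{k-1}\ln 2$ together with $(1-c\log k/k)^{k}=k^{-\Theta(1)}$ to show that each variable in the window is hit in expectation by $\omega(1)$ clauses. The difference is only in the concentration step. The paper does not invoke any named inequality: it union-bounds over all $\binom{|L|}{i}$ ways to choose which $i\le |L|/250$ variables in the window are good, and for each such choice bounds the probability that every one of the $m$ independent clauses misses the remaining $\ge |L|/4$ bad variables by $\bigl(1-\Omega(\log k/(k2^{k}))\bigr)^{m}$; the constants $1/2$ and $1/500$ are chosen so that the binomial factor is beaten by this decay. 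Your route through negative association is conceptually cleaner---it decouples the mean estimate from the tail bound and requires no constant-balancing---at the cost of importing the Joag--Dev--Proschan closure properties. Both arguments rest on the same structural fact (for a fixed clause $j$ the events $\{W_j=v\}$ are mutually exclusive across $v$), and yield the same bound; the paper's version is fully elementary, while yours would transfer more directly if the clause distribution were perturbed.
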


\begin{proof}
Let $F \in_r P(n,k,m,\sigma)$ and let 
$L$ be the last (when sorted by index) $n\ln{k}/(2k)$ variables. Let $L_g, L_b$ be the good and bad variables respectively, with respect to $\sigma$, among $L$. Let $E$ denote the event that $|L_g| \leq n\ln{k}/(500k)$. Our goal is to prove a strong upper bound on the probability that $E$ occurs. For any $x_i \in L$, we have that, $i \geq n(1-\ln{k}/(2k))$. Suppose clause $C$ is good with respect to $x_i \in L_b$, then we know that $C$ does not occur in $F$. Next, we will lower bound the probability of such a clause occurring with respect to a fixed variable $x_i \in L$. Recall that in planted $k$-SAT each clause is drawn uniformly at random from the set of all clauses which satisfy $\sigma$. So we get that,

\begin{align*}
&\Pr[\text{$C$ is good with respect to $x_i \in L$}]\\
&= \frac{\text{Number of clauses which will make $x_i \in L$ good}}{\text{Total number of clauses which satisfy $\sigma$}}\\
&= \frac{\binom{i-1}{k-1}}{\binom{n}{k}(2^k-1)}\\ 
&\geq \frac{1}{2}\left(\frac{i}{n}\right)^{k-1}\frac{k}{2^kn} \hspace{10pt} \text{[As $i \geq n(1-\ln{k}/(2k))$]}\\ 
&\geq \frac{1}{2}\left(\frac{i}{n}\right)^{k}\frac{k}{2^kn}\\ 
&\geq \frac{1}{2}\left(1-\frac{\ln{k}}{2k}\right)^k\frac{k}{2^kn} \hspace{10pt}  \text{[As $i \geq n(1-\ln{k}/(2k))$]}\\ 
&\geq \frac{1}{2}\left(e^{-\ln{k}/k}\right)^k\frac{k}{2^kn} \hspace{10pt} \text{[As $k$ is a big enough constant and $e^{-w} \leq 1-w/2$ for small enough $w > 0$]}\\
&\geq \frac{1}{2^{k+1}n}
\end{align*}

If $E$ is true, then $|L_b| > n\ln{k}/(4k)$. So the probability of sampling a clause which make some variable $x_i \in L_b$ good is $\geq \frac{\ln{k}}{k2^{k+3}}$ as the set of clauses which make different variables good are disjoint. We will now upper bound the probability of $E$ occurring.

\begin{align*}
\Pr[E] &\leq \sum_{i=1}^{n\ln{k}/(500k)} \Pr[\text{Exactly $i$ variables among the last $n\ln{k}/(2k)$ variables are good}]\\
&\leq \sum_{i=1}^{ n\ln{k}/(500k)} \binom{n\ln{k}/(2k)}{i} \left(1-\frac{\ln{k}}{k2^{k+3}}\right)^{m}\\
&\leq n\binom{n\ln{k}/(2k)}{n\ln{k}/(500k)} \left(1-\frac{\ln{k}}{k2^{k+3}}\right)^{n2^{k-1}\ln{2}} \hspace{20pt} \text{[As $m > n2^{k-1}\ln{2}$]} \\
&\leq n\binom{n\ln{k}/(2k)}{n\ln{k}/(500k)} \left(e^{-\frac{\ln{k}}{k2^{k+3}}}\right)^{n2^{k-1}\ln{2}} \hspace{20pt} \text{[As $1-x \leq e^{-x}$ for $x > 0$]}\\
&\leq n\binom{n\ln{k}/(2k)}{n\ln{k}/(500k)} \left(2^{-\frac{n\ln{k}}{16k}}\right)\\
&\leq 2^{-\delta\frac{n\ln{k}}{k}} 
\end{align*}

for appropriately small but constant $\delta > 0$, which proves the lemma statement.

\end{proof}

We are now ready to prove Theorem~\ref{thm:planted-algo}.

\begin{proof}[Proof of Theorem~\ref{thm:planted-algo}] By Lemma~\ref{lem:good}, we know that with probability $\geq (1-p)$ for $p = 2^{-\Omega \left(n\left(\frac{\log{k}}{k}\right)\right)}$, the number of good variables with respect to hidden planted solution $\sigma$ in $F$ are $\geq \gamma n\log{k}/k$ for some constant $\gamma > 0$. For such instances one run of the PPZ algorithm will output $\sigma$ with probability $2^{-n(1-\gamma\log{k}/k)}$. Repeating the PPZ algorithm $\text{poly}(n)2^{n(1-\gamma\log{k}/k)}$ times, implies a success probability $\geq (1-p')$ for $p' = 2^{-n}$. Hence the overall error probability is at most $p+p' = 2^{-\Omega \left(n\left(\frac{\log{k}}{k}\right)\right)}$.
\end{proof}

We proved that PPZ algorithm runs in time $2^{n(1-\Omega(\frac{\log{k}}{k}))}$ when $m > n2^{k-1}\ln{2}$. For $m \leq 2^{k-1}\ln{2}$ we observe that the much simpler approach of randomly sampling works whp. This is because by Corollary~\ref{cor:planted-concentration} whp the number of solutions of $F \in_r P(n, k, m)$  for $m \leq 2^{k-1}\ln{2}$ is $\geq 2^{n/2}2^{-O(nk/2^k)}$. If so, then by randomly sampling $\text{poly(n)}2^{n/2}2^{O(nk/2^k)}$ assignments we will find a solution whp. As $m$ decreases further this sampling approach gives faster algorithms for finding a solution.

\section{Reductions from Random \texorpdfstring{$k$}{k}-SAT to Planted Random \texorpdfstring{$k$}{k}-SAT}\label{sec:random}
In this section we give a reduction from random k-SAT to planted k-SAT, which eventually gives an exponential algorithm for random k-SAT (see {Theorem~\ref{thm:random-algo}}). The following lemma is similar to the results in Achlioptas~\cite{achlioptas-algorithmic}, and we present it here for completeness.

\begin{lemma}[\cite{achlioptas-algorithmic}]\label{lem:small-reduction}
Suppose there exists an algorithm $A$ for planted k-SAT $P(n, k, m)$, for some $m\geq2^k\ln{2}(1-f(k)/2)n$, which finds a solution in time $2^{n(1-f(k))}$ and with probability $1-2^{-nf(k)}$, where $1/k < f(k) = o_{k}(1)$\footnote{We can assume wlog that $f(k) > 1/k$ as we already have a $2^{n(1-1/k)}$ algorithm for worst case $k$-SAT.}. Then given a random $k$-SAT instance sampled from $R^{+}(n, k, m)$, we can find a satisfiable solution in $2^{n(1-\Omega(f(k)))}$ time with $1-2^{-n\Omega(f(k))}$ probability.
\end{lemma}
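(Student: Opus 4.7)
The plan is to run algorithm $A$ directly on the given $F\in_r R^+(n,k,m)$, so the running time of the reduction is just that of $A$, namely $2^{n(1-f(k))}=2^{n(1-\Omega(f(k)))}$. All the content is in the probability analysis.

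The first step is to compare the two distributions explicitly. Lemma~\ref{lem:biased} gives $\Pr_P[F]=Z(F)\,p$ for a constant $p$ depending only on $n,k,m$. Combining this with $\Pr_{R^+}[F]=\Pr_R[F]/p_{sat}$ (for satisfiable $F$), the identity $E_R[Z]=p_{sat}\cdot E_{R^+}[Z]$, and the explicit expression $E_R[Z]=2^n(1-1/2^k)^m$ from Lemma~\ref{lem:expectedR}, a short calculation yields the clean likelihood ratio
\[
\frac{\Pr_P[F]}{\Pr_{R^+}[F]} \;=\; \frac{Z(F)}{E_{R^+}[Z]}
\]
for every satisfiable $F$, where $Z(F)$ is the number of satisfying assignments of $F$.

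The second step is to transport $A$'s success guarantee from $P$ to $R^+$ via this ratio. Call $F$ \emph{typical} if $Z(F)\ge E_{R^+}[Z]/2^{nf(k)/2}$. For typical $F$ we have $\Pr_{R^+}[F]\le 2^{nf(k)/2}\Pr_P[F]$, so
\[
\sum_{F\text{ typical}} \Pr_{R^+}[F]\cdot\Pr[A\text{ fails on }F]
\;\le\; 2^{nf(k)/2}\cdot\Pr_{F\sim P,\,A}[A\text{ fails}]
\;\le\; 2^{nf(k)/2}\cdot 2^{-nf(k)}
\;=\; 2^{-n\Omega(f(k))}.
\]
What remains is to bound $\Pr_{R^+}[F\text{ atypical}]$ by $2^{-n\Omega(f(k))}$.

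The third step splits on $m$. For $m<\alpha_d n$, I invoke an exponentially strong version of Lemma~\ref{lem:expectedRplus} (obtained from the second-moment calculation behind Lemma~\ref{lem:expectedR} via a Paley--Zygmund / higher-moment strengthening), which gives $Z(F)\ge E_{R^+}[Z]/2^{O(nk/2^k)}$ with probability $1-2^{-n\Omega(f(k))}$ under $R^+$; since $f(k)>1/k$ while $k/2^k$ is super-polynomially smaller, $2^{O(nk/2^k)}\ll 2^{nf(k)/2}$, so atypical formulas are exponentially rare. For $m\ge\alpha_d n$, I instead exploit that $E_{R^+}[Z]$ itself is tiny: the explicit formula $E_R[Z]=2^n(1-1/2^k)^m$ together with $p_{sat}\to 1$ from Lemma~\ref{lem:k-sat-conj} handles $m$ up to $(1-\epsilon)\alpha_{sat}n$, and Lemma~\ref{lem:expectedRplus2} handles $m\ge(\alpha_{sat}-1)n$; both give $E_{R^+}[Z]\le 2^{O(nk/2^k)}\ll 2^{nf(k)/2}$, so the trivial lower bound $Z(F)\ge 1$ already makes every satisfiable $F$ typical and the atypical set is empty.

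The main obstacle is the intermediate range $\alpha_d n\le m\le\alpha_{sat}n$, which is directly covered neither by Lemma~\ref{lem:expectedR}'s concentration nor by Lemma~\ref{lem:expectedRplus2}'s expectation bound; the case analysis above sidesteps this by exploiting the rapid decay of $E_R[Z]$ past $\alpha_d n$ combined with the control of $p_{sat}$ from Lemma~\ref{lem:k-sat-conj}. A secondary subtlety is that the concentration statement in Lemma~\ref{lem:expectedRplus} is only $o(1)$ as stated, so I will need to extract an exponential tail from its underlying second-moment proof in order to push the atypical probability down to $2^{-n\Omega(f(k))}$.
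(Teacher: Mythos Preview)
Your likelihood-ratio identity $\Pr_P[F]/\Pr_{R^+}[F]=Z(F)/E_{R^+}[Z]$ is correct and is in fact equivalent to what the paper does, but you then take an unnecessary detour that creates a real gap. The paper never splits into typical/atypical formulas and never needs any lower-tail information on $Z$. It simply union-bounds over satisfying assignments:
\[
\Pr_{R^+}[A\text{ fails}]\ \le\ \sum_{\sigma} \Pr_{R^+}[\sigma\text{ satisfies }F]\cdot\Pr_{P(n,k,m,\sigma)}[A\text{ fails}]\ =\ E_{R^+}[Z]\cdot\Pr_{P}[A\text{ fails}].
\]
Equivalently, your own identity gives $\Pr_{R^+}[F]=(E_{R^+}[Z]/Z(F))\,\Pr_P[F]\le E_{R^+}[Z]\,\Pr_P[F]$ using only the trivial bound $Z(F)\ge 1$, which sums to the same inequality with no case split. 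The observation you are missing is that the hypothesis $m\ge 2^k\ln 2\,(1-f(k)/2)n$ already forces $E_{R^+}[Z]\le 2\cdot 2^{nf(k)/2}$ for \emph{every} $m$ in the lemma's range (Lemma~\ref{lem:expectedRplus} handles $m$ below the threshold and Lemma~\ref{lem:expectedRplus2} handles $m$ near and above it), so the multiplicative loss is exactly the $2^{nf(k)/2}$ you need. No case analysis on $m$, and no information about the distribution of $Z$ beyond its expectation, is required.

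Your third step, as written, does not go through. You propose to obtain $\Pr_{R^+}[Z<E_{R^+}[Z]/2^{nf(k)/2}]\le 2^{-n\Omega(f(k))}$ by ``extracting an exponential tail'' from the second-moment proof behind Lemma~\ref{lem:expectedRplus}. But a second-moment/Paley--Zygmund argument can only yield $\Pr[Z\ge cE[Z]]\ge\text{const}$ (or $1-o(1)$ with more work); it cannot produce an exponentially small lower-tail probability, and no such statement is available in the paper or in the cited literature. So the atypical-probability bound you need is genuinely unproven. Fortunately it is also unneeded: once you notice $E_{R^+}[Z]\le 2\cdot 2^{nf(k)/2}$ throughout, your ``atypical'' threshold $E_{R^+}[Z]/2^{nf(k)/2}$ is at most $2$, and (up to an irrelevant constant) every satisfiable $F$ is already typical.
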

\begin{proof}
Let $F$ be sampled from $R^+(n, k, m)$, and let $Z$ denote its number of solutions with $s$ its expected value. As $f(k) > 1/k$ and $m\geq2^k\ln{2}(1-f(k)/2)n$, Lemma~\ref{lem:expectedRplus} and \ref{lem:expectedRplus2} together imply that ${s \leq 2\cdot2^{nf(k)/2}}$.

We will now run Algorithm $A$. Note that if Algorithm $A$ gives a solution it is correct hence we can only have error when the formula is satisfiable but algorithm $A$ does not return a solution. We will now upper bound the probability of $A$ making an error.
\begin{align*}
&\Pr_{F \in R^+(n, k, m), A}[\text{A does not return a solution}]\\ \leq& \sum_{\sigma \in \{0, 1\}^n} \Pr_{F\in R^+(n, k, m), A}[\text{$\sigma$ satisfies F but $A$ does not return a solution}]\\
\leq& \sum_{\sigma \in \{0, 1\}^n} \Pr_{F\in R^+(n, k, m), A}[\text{$A$ does not return a solution} \mid \text{$\sigma$ satisfies F}]\Pr_{F\in R^+(n, k, m)}[\text{$\sigma$ satisfies F}]\\
\leq& \sum_{\sigma \in \{0, 1\}^n} \Pr_{F\in P(n, k, m, \sigma), A}[\text{$A$ does not return a solution}]\Pr_{F\in R^+(n, k, m)}[\text{$\sigma$ satisfies F}]
\end{align*}
where the last inequality used the fact that $R^+(n, k ,m)$ conditioned on having $\sigma$ as a solution is the distribution $P(n, k, m, \sigma)$.
Now note that $\Pr_{F\in R^+(n, k, m)}[\text{$\sigma$ satisfies F}] = s/2^n$ and $P(n, k, m)$ is just $P(n, k, m, \sigma)$ where $\sigma$ is sampled uniformly from $\{0, 1\}^n$. Hence the expression simplifies to 

$$= \frac{s}{2^n} (2^n\Pr_{F\in P(n, k, m), A}[\text{$A$ does not return a solution}]) = s\Pr_{F\in P(n, k, m), A}[\text{$A$ does not return a solution}]
$$
 As $s \leq 2\cdot2^{nf(k)/2}$ the error probability is $\leq 2\cdot2^{nf(k)/2}2^{-nf(k)} \leq 2\cdot2^{-nf(k)/2} = 2^{-\Omega(nf(k))}$.
\end{proof}

Next, we give another reduction from random $k$-SAT to planted $k$-SAT. This theorem is different from the previous one, in that, given a planted $k$-SAT algorithm that works in a certain regime of $m$, it implies a random $k$-SAT algorithm for \textit{all} values of $m$.

\begin{reminder}{Theorem~\ref{thm:reduction}} Suppose there exists an algorithm $A$ for planted k-SAT $P(n, k, m)$ for all $m\geq2^k\ln{2}(1-f(k)/2)n$, which finds a solution in time $2^{n(1-f(k))}$ and with probability $1-2^{-nf(k)}$, where $1/k < f(k) = o_{k}(1)$. Then for all $m'$, given a random $k$-SAT instance sampled from $R^{+}(n, k, m')$  we can whp find a satisfiable solution in $2^{n(1-\Omega(f(k)))}$ time.
\end{reminder}

\begin{proof}

Let $F$ be sampled from $R^+(n, k, m)$, and let $Z$ denote its number of solutions with $s$ its expected value. The expected number of solutions for $F'$ sampled from $R(n, k, m')$ serves as a lower bound for $s$. Hence if $m'\leq2^k\ln{2}(1-f(k)/2)n\leq\alpha_{d}n$, then $s > 2^{nf(k)/2}$ and furthermore, as we have $f(k) > 1/k$, Lemma ~\ref{lem:expectedRplus} implies that, $\lim_{n \to \infty} \Pr[ Z < s/2^{O(nk/2^k)}] = 0$. Hence, if we randomly sample $O(2^n2^{O(nk/2^k)}/s) = 2^{n(1-\Omega(f(k)))}$ assignments, one of them will satisfy $F$ whp. Otherwise if $m'\geq2^k\ln{2}(1-f(k)/2)n$ then we can use Lemma~\ref{lem:small-reduction} to solve it in required time.
\end{proof}

Now we combine Algorithm~\ref{simple-ppz} for planted $k$-SAT and the reduction in Theorem~\ref{thm:reduction}, to get an algorithm for finding solutions of random $k$-SAT (conditioned on satisfiability). This disproves Super-SETH for random $k$-SAT.
\begin{theorem}\label{thm:random-algo}
Given a random $k$-SAT instance $F$ sampled from $R^+(n, k, m)$ we can find a solution in $2^{n(1-\Omega(\frac{\log{k}}{k}))}$ time whp.
\end{theorem}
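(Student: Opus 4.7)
The plan is to instantiate Theorem~\ref{thm:reduction} with the PPZ-based planted algorithm from Theorem~\ref{thm:planted-algo}, taking $f(k) = c\log k / k$ for a sufficiently small absolute constant $c > 0$ (chosen to match the hidden constant in the $\Omega(\log k/k)$ of Theorem~\ref{thm:planted-algo}). Since $k$ is assumed to be a large enough constant, this $f$ satisfies both $f(k) > 1/k$ and $f(k) = o_k(1)$, which are the hypotheses Theorem~\ref{thm:reduction} demands of its input algorithm.

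Before invoking Theorem~\ref{thm:reduction}, I would check that the ``large $m$'' regime on which the planted algorithm has the asserted guarantees contains the regime $m \geq 2^k\ln 2\,(1-f(k)/2)n$ that Theorem~\ref{thm:reduction} needs. This is immediate: Theorem~\ref{thm:planted-algo} applies for all $m > 2^{k-1}\ln 2\cdot n$, and $2^k\ln 2\,(1-f(k)/2)n \geq 2^{k-1}\ln 2\cdot n$ holds as soon as $f(k) \leq 1$, which is the case for large enough $k$. With this verified, PPZ meets the hypothesis of Theorem~\ref{thm:reduction} for every admissible $m$, delivering runtime $2^{n(1-f(k))} = 2^{n(1-\Omega(\log k/k))}$ and success probability $1-2^{-nf(k)} = 1-2^{-\Omega(n\log k/k)}$.

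Applying Theorem~\ref{thm:reduction} then produces, for every value of $m$, an algorithm running in time $2^{n(1-\Omega(f(k)))} = 2^{n(1-\Omega(\log k/k))}$ which finds a satisfying assignment for $F \in_r R^+(n,k,m)$ with probability $1-2^{-n\Omega(\log k/k)}$, and in particular whp, which is exactly the stated conclusion. There is no real obstacle here; once Theorems~\ref{thm:planted-algo} and~\ref{thm:reduction} are in hand the argument is a single substitution, with the only thing to double-check being the elementary inequality $2^k(1-f(k)/2) \geq 2^{k-1}$ that guarantees the planted regime dominates the one required by the reduction.
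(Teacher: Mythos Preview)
Your proposal is correct and matches the paper's own proof: both simply plug Theorem~\ref{thm:planted-algo} into Theorem~\ref{thm:reduction} with $f(k)=\Theta(\log k/k)$ and observe that the hypotheses (namely $1/k < f(k)=o_k(1)$ and applicability of the planted algorithm throughout the range $m\ge 2^k\ln 2\,(1-f(k)/2)n$) are met for large enough $k$. Your explicit check that $2^k\ln 2\,(1-f(k)/2)n \ge 2^{k-1}\ln 2\cdot n$ is a detail the paper leaves implicit, but otherwise the arguments are identical.
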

\begin{proof}
By Theorem~\ref{thm:planted-algo} we have an algorithm for planted $k$-SAT running in $2^{n(1-\Omega(\frac{\log{k}}{k}))}$ time with $1-2^{-\Omega \left(n\left(\frac{\log{k}}{k}\right)\right)}$ probability for all $m > (2^{k-1}\ln{2})n$. This algorithm satisfies the required conditions in Theorem~\ref{thm:reduction} with $f(k) = \Omega(\log{k}/k)$ for large enough $k$. The implication in Theorem~\ref{thm:reduction} proves the required statement.
\end{proof}

Just like in the case of planted $k$-SAT, we can whp find solutions of $R^+(n, k, m)$ by random sampling when $m < n(2^k\ln{2}-k)$ is smaller by just using random sampling. The correctness of random sampling follows from Lemma~\ref{lem:expectedRplus}.

\section{Planted and Random \texorpdfstring{$k$}{k}-SAT for large \texorpdfstring{$m$}{m} }\label{sec:plantedlarge}
In the previous sections we gave an algorithm that works at the threshold and also everywhere else. In this section we will work in the regime where $m$ is away from the threshold and give improved runtime analysis. As mentioned before current polynomial time algorithms that find solutions require $m$ to be at least $\frac{4^k}{k}n$. To our knowledge no algorithms are known for $2^kn < m < \frac{4^k}{k}n$ other than the worst case $k$-SAT algorithms. The proofs are similar to the proofs in Section~\ref{sec:planted} and~\ref{sec:random}.

\begin{lemma}\label{lem:good2}
A planted $k$-SAT $F$, instance sampled from $P(n, k, m, \sigma)$ with $2^{k+o(k)}n \geq m \geq 2^{k}n$ has at least $\Omega(nz)$ good variables with probability $1-2^{-\Omega(nz)}$ with respect to the assignment $\sigma$ where $z = (\ln(m/n)-k\ln{2})/k$.
\end{lemma}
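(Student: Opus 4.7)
The plan is to follow the template of Lemma~\ref{lem:good}: identify a tail $L$ of high-index variables, lower-bound the probability $q_i$ that a random planted clause is critical with respect to a fixed $x_i \in L$ (with $x_i$ being the max-index variable), and apply a union bound to show that very few variables in $L$ can be bad. The main twist is that here $m$ is larger than in Lemma~\ref{lem:good}, so we can afford a longer tail of length $\Theta(nz)$ in place of $\Theta(n \log k / k)$. Throughout, write $\mu := zk = \ln(m/n) - k\ln 2$, so that $m = 2^k e^\mu n$ exactly and $\mu \in [0, o(k)]$ by the assumed range of $m$.

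The key steps, in order, are: (i) fix $|L| = nz/4$ so that $i/n \geq 1 - z/4$ for every $x_i \in L$; (ii) repeat the binomial calculation of Lemma~\ref{lem:good} to get $q_i \geq \frac{k}{2^{k+1} n}(1 - z/4)^{k-1}$, and then use $(1-x)^{k-1} \geq e^{-(k-1)x/(1-x)}$ together with $z = o(1)$ to show $(1 - z/4)^{k-1} \geq e^{-\mu/3}$; multiplying by $m = 2^k e^\mu n$ gives the crucial inequality $m q_i \geq (k/2) e^{2\mu/3} \geq k/2$, which holds uniformly in $\mu \geq 0$; (iii) for concentration, let $E = \{|L_b| \geq nz/8\}$; if $E$ holds then some fixed $S \subseteq L$ of size $nz/8$ is entirely bad, which forces no clause of $F$ to be good with respect to any $x_i \in S$; since the events ``clause is good with respect to $x_i$'' are disjoint across $i$ on a single clause, this has probability at most $(1 - \sum_{x_i \in S} q_i)^m \leq \exp(-n\mu e^{2\mu/3}/16)$; (iv) union-bound over the $\binom{nz/4}{nz/8} \leq 2^{nz/4}$ choices of $S$ to obtain $\Pr[E] \leq 2^{nz/4} \exp(-n\mu/16)$, which for large enough constant $k$ is $2^{-\Omega(nzk)} \leq 2^{-\Omega(nz)}$. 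On $\neg E$ we have $|L_g| \geq |L|/2 = \Omega(nz)$, which is the claim.

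The main obstacle I anticipate is the calibration of $|L|$. If $|L|$ is taken too large, the factor $(1 - |L|/n)^{k-1}$ decays faster than the $e^\mu$ gained from the density $m \geq 2^k n$, and the key bound $m q_i = \Omega(k)$ collapses; if $|L|$ is taken too small, one cannot hope for $\Omega(nz)$ good variables. The choice $|L| = \Theta(nz)$ is precisely the threshold where $(|L|/n)k = \Theta(\mu)$, so the loss $e^{-\Theta(\mu)}$ in $q_i$ is exactly cancelled by the gain $e^\mu$ in $m = 2^k e^\mu n$. This delicate balancing is the one nontrivial change from Lemma~\ref{lem:good}; once it is handled correctly, every remaining step is a direct parallel of that proof.
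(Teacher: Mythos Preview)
Your proposal is correct and follows essentially the same approach as the paper's proof: both take $L$ to be the last $\Theta(nz)$ variables, lower-bound the probability that a random planted clause is good for a fixed $x_i\in L$ by $\Theta\!\left(\tfrac{k}{2^k n}e^{-\Theta(zk)}\right)$, and then show that too many bad variables in $L$ is $2^{-\Omega(nz)}$-unlikely. The only cosmetic differences are constants ($|L|=nz/4$ versus the paper's $nz/2$), the elementary inequality used to handle $(1-\Theta(z))^{k}$, and that you union-bound over bad sets $S$ while the paper sums over the number of good variables; your formulation is arguably cleaner and even yields the slightly stronger tail $2^{-\Omega(nzk)}$, but the argument is the same.
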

\begin{proof}
In this proof by good/bad variables we will mean good/bad variables with respect to $\sigma$.

Let $F \in_r P(n,k,m,\sigma)$ and let 
$L$ be the last (when sorted by index) $nz/2$ variables. Let $L_g, L_b$ be the good and bad variables respectively, with respect to $\sigma$, among $L$. Let $E$ denote the event that $|L_g| \leq nz/500$. Our goal is to prove a strong upper bound on the probability that $E$ occurs. For any $x_i \in L$, we have that, $i \geq n(1-z/2)$. Suppose clause $C$ is good with respect to $x_i \in L_b$, then we know that $C$ does not occur in $F$. Next, we will lower bound the probability of such a clause occurring with respect to a fixed variable $x_i \in L$. Recall that in planted $k$-SAT each clause is drawn uniformly at random from the set of all clauses which satisfy $\sigma$. So we get that,

\begin{align*}
&\Pr[\text{$C$ is good wrt $x_i \in L$}]\\
&= \frac{\text{Number of clauses which will make $x_i \in L$ good}}{\text{Total number of clauses which satisfy $\sigma$}}\\ 
&= \frac{\binom{i-1}{k-1}}{\binom{n}{k}(2^k-1)}\\ 
&\geq \frac{1}{2}\left(\frac{i}{n}\right)^k\frac{k}{2^kn} \hspace{10pt} \text{[As $i \geq n(1-z/2), z = o(1)$]}\\ 
&\geq \frac{1}{2}\left(1-\frac{z}{2}\right)^k\frac{k}{2^kn} \hspace{10pt} \text{[As $i \geq n(1-z/2)$]}\\ 
&\geq \frac{1}{2}\left(e^{-z}\right)^k\frac{k}{2^kn}\hspace{10pt}  \text{[As $z = o(1)$ and $e^{-w} \leq 1-w/2$ for small enough $w > 0$]} \\
&\geq \frac{e^{-zk}}{2^{k+1}n}
\end{align*}

If $E$ is true, then $|L_b| > nz/4$. So the probability of sampling a clause which make some variable $x_i \in L_b$ good is $\geq \frac{ze^{-zk}}{2^{k+3}}$ as the set of clauses which make different variables good are disjoint. We will now upper bound the probability of $E$ occurring.

\begin{align*}
\Pr[E] &\leq \sum_{i=1}^{nz/500} \Pr[\text{Exactly $i$ good variables among the last $nz/2$ variables}]\\
&\leq \sum_{i=1}^{ nz/500} \binom{nz/2}{i} \left(1-\frac{ze^{-zk}}{2^{k+3}}\right)^{m}\\
&\leq n\binom{nz/2}{nz/500} \left(1-\frac{ze^{-zk}}{2^{k+3}}\right)^{ne^{zk}2^k} \hspace{10pt} \text{[As $m = e^{zk}2^kn$]}\\
&\leq n\binom{nz/2}{nz/500} \left(e^{-\frac{ze^{-zk}}{2^{k+3}}}\right)^{ne^{zk}2^k} \hspace{10pt} \text{[As $1-x \leq e^{-x}$ for $x > 0$]}\\
&\leq n\binom{nz/2}{nz/500} \left(e^{-\frac{nz}{8}}\right)\\
&\leq 2^{-\delta nz} 
\end{align*}

for appropriately small but constant $\delta > 0$, which proves the lemma statement.

\end{proof}

\begin{theorem}\label{thm:planted-algo2} Given a planted $k$-SAT instance $F$ sampled from $P(n, k, m)$ with $2^{k+o(k)}n > m > 2^kn$ define $z = (\ln(m/n)-k\ln{2})/k$ and $z' = z + \ln{k}/k$. Then we we can find a solution of $F$ in $2^{n(1-\Omega(z'))}$ time with $1-2^{-\Omega \left(nz'\right)}$ probability (over the planted $k$-SAT distribution and the randomness of the algorithm).
\end{theorem}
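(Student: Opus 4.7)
The plan is to mirror the proof of Theorem~\ref{thm:planted-algo}, using Lemma~\ref{lem:good2} in place of Lemma~\ref{lem:good} whenever $z$ is the dominant term in $z'$. The algorithm itself is unchanged from Theorem~\ref{thm:planted-algo}: simply run Simple-PPZ repeatedly $\text{poly}(n)\cdot 2^{n(1-\gamma z')}$ times for a suitably small constant $\gamma>0$, and output any satisfying assignment found. The analysis then reduces to lower-bounding the number of variables of $F$ that are good with respect to the planted $\sigma$ by $\Omega(nz')$ with probability $1-2^{-\Omega(nz')}$.

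Since $z' = z + \ln k/k$, the desired bound equals (up to a constant) the maximum of $z$ and $\ln k/k$, so I will handle this by a case split on which term dominates. In the regime $z \leq \ln k/k$ we have $z' = \Theta(\ln k/k)$, and the hypothesis $m \geq 2^k n > n 2^{k-1}\ln 2$ lets us invoke Lemma~\ref{lem:good} to produce $\Omega(n\log k/k) = \Omega(nz')$ good variables with failure probability $2^{-\Omega(n\log k/k)} = 2^{-\Omega(nz')}$. In the regime $z > \ln k/k$ we have $z' = \Theta(z)$, and Lemma~\ref{lem:good2} directly yields $\Omega(nz) = \Omega(nz')$ good variables with failure probability $2^{-\Omega(nz)} = 2^{-\Omega(nz')}$. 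In both cases the good-variable bound holds with the probability demanded by the theorem.

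Given $\Omega(nz')$ good variables, a single execution of Simple-PPZ outputs the planted $\sigma$ with probability at least $2^{-(n-\Omega(nz'))}$: if the $n-\Omega(nz')$ bad variables happen to be guessed consistently with $\sigma$, then unit propagation forces each remaining good variable to agree with $\sigma$. Repeating the PPZ subroutine $\text{poly}(n)\cdot 2^{n(1-\Omega(z'))}$ times boosts the conditional success probability to $1-2^{-n}$, and the total runtime remains $2^{n(1-\Omega(z'))}$. A union bound over the good-variable failure event and the PPZ amplification failure event yields the overall $1-2^{-\Omega(nz')}$ success guarantee claimed in the theorem.

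I do not expect a genuine obstacle: the proof is structurally identical to that of Theorem~\ref{thm:planted-algo}, and the only bookkeeping subtlety is verifying that the two regimes of the case split both yield a matching pair of exponents in the runtime and the failure probability. The one thing worth double-checking is that the hypothesis $m \geq 2^k n$ is strong enough for Lemma~\ref{lem:good} to apply in the small-$z$ regime, which it is since $2^k > 2^{k-1}\ln 2$.
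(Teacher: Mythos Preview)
Your proposal is correct and takes essentially the same approach as the paper. The paper's proof likewise combines the guarantee from Lemma~\ref{lem:good2} (giving the $z$ exponent) with the guarantee from Theorem~\ref{thm:planted-algo} (giving the $\ln k/k$ exponent) and observes that the two together yield the $z'$ bound; your explicit case split on which of $z$ and $\ln k/k$ dominates, invoking Lemma~\ref{lem:good} directly in the small-$z$ case, is just a slightly more concrete rendering of the same combination.
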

\begin{proof} By Lemma~\ref{lem:good2} we know that with probability $\geq (1-p)$ for $p = 2^{-\Omega \left(nz\right)}$ number of good variables in $F$ are $\gamma nz$ for some $\gamma > 0$ wrt the hidden planted solution $\sigma$. For such instances one run of the PPZ algorithm will output $\sigma$ with probability $2^{-n(1-\gamma z)}$. Repeating the PPZ $\text{poly(n})2^{n(1-\gamma z)}$ implies success probability of $\geq (1-p')$ for $p' = 2^{-n}$. The overall error probability is at most $p+p' = 2^{-\Omega \left(nz\right)}$.

Also by Theorem~\ref{thm:planted-algo} there exits an algorithm with $2^{n(1-\Omega(\frac{\log{k}}{k}))}$ time with $1-2^{-\Omega \left(n\left(\frac{\log{k}}{k}\right)\right)}$ probability. Both these algorithms together imply an algorithm running in $2^{n(1-\Omega(z'))}$ time with $1-2^{-\Omega \left(nz'\right)}$ probability (over the planted $k$-SAT distribution and the randomness of the algorithm).
\end{proof}

\begin{theorem}\label{thm:random-algo2} Given a random $k$-SAT instance $F$ sampled from $R^+(n, k, m)$ with $2^{k+o(k)}n > m > 2^kn$ define $z = (\ln(m/n)-k\ln{2})/k$ and $z' = z + \ln{k}/k$. Then we we can find a solution of $F$ in $2^{n(1-\Omega(z'))}$ time with $1-2^{-\Omega \left(nz'\right)}$ probability (over the random $k$-SAT distribution $R^+$ and the randomness of the algorithm).
\end{theorem}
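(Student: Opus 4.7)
The plan is to mirror the proof of Theorem~\ref{thm:random-algo} but to plug in the sharper planted-case runtime available in this regime, namely the one given by Theorem~\ref{thm:planted-algo2}. First I would invoke Theorem~\ref{thm:planted-algo2} to obtain an algorithm for $P(n,k,m)$ with runtime $2^{n(1-\Omega(z'))}$ and success probability $1 - 2^{-\Omega(nz')}$ throughout the range $2^{k+o(k)}n > m > 2^k n$. This matches the hypothesis of Lemma~\ref{lem:small-reduction} with $f(k) = \Omega(z')$, modulo checking its two numerical side conditions.

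Those side conditions are easy to dispatch. The requirement $f(k) > 1/k$ is immediate from $z' \geq (\ln k)/k > 1/k$ for large $k$. The requirement $m \geq 2^k \ln 2 \,(1 - f(k)/2)\, n$ is automatic because our hypothesis $m > 2^k n$ already exceeds $2^k \ln 2 \cdot n$, regardless of $f(k)$. Applying Lemma~\ref{lem:small-reduction} then gives an algorithm for $R^+(n,k,m)$ whose runtime is $2^{n(1-\Omega(f(k)))} = 2^{n(1-\Omega(z'))}$ and whose error is $2^{-n\Omega(f(k))} = 2^{-\Omega(nz')}$, exactly as stated.

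The one subtlety worth flagging, though it is not really an obstacle, is that the regime $m > 2^k n$ places the instance strictly above the satisfiability threshold $\alpha_{sat} n = 2^k \ln 2 \cdot n$; at first glance this might seem to break the $R^+$-to-planted reduction since typical formulas in $R(n,k,m)$ are unsatisfiable here. However, inside Lemma~\ref{lem:small-reduction} the only thing one actually needs about $R^+$ is an upper bound on the expected number of solutions $s$. In our range this bound comes from Lemma~\ref{lem:expectedRplus2}, which yields $s \leq 2^{O(n/2^k)}$, a quantity far smaller than the $2 \cdot 2^{nf(k)/2}$ bound used in the original statement of Lemma~\ref{lem:small-reduction}, because $z' \geq (\ln k)/k$ dominates $1/2^k$. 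The reduction error $s \cdot 2^{-nf(k)} \leq 2^{O(n/2^k) - \Omega(nz')} = 2^{-\Omega(nz')}$ thus still comes out as claimed, and no modification to the reduction itself is needed; all remaining arithmetic is a direct reprise of the calculations already carried out in Section~\ref{sec:random}.
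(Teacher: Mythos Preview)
Your proposal is correct and follows exactly the same route as the paper: compose the planted-case algorithm of Theorem~\ref{thm:planted-algo2} with the reduction of Lemma~\ref{lem:small-reduction}. Your extra discussion of the above-threshold subtlety is fine but unnecessary, since the proof of Lemma~\ref{lem:small-reduction} already invokes Lemma~\ref{lem:expectedRplus2} to handle precisely that regime.
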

\begin{proof}
This follows directly from composing the algorithm in Theorem~\ref{thm:planted-algo2} and the reduction in Lemma~\ref{lem:small-reduction}.
\end{proof}

As an example, the above theorem implies: For $F \in_r R^+(n, k, m)$ and $m = 2^{k+\sqrt{k}}n$ we have a $2^{n(1-\Omega(1/\sqrt{k}))}$ algorithm which works with $1-2^{-\Omega \left(n/\sqrt{k}\right)}$ probability.

Next we will increase $m$ even further and prove the existence of more good variables for the PPZ algorithm.

\begin{lemma}\label{lem:good3}
For planted $k$-SAT $F$, instance $F$ sampled from $P(n, k, m, \sigma)$ with $ m \geq t^{k}n$ where $t > 2$ is a constant. Then $F$ has at least $n(1-2/t)(1-2/k)$ good variables with probability $1-2^{-\Omega(n(1-2/t))}$ with respect to the assignment $\sigma$.
\end{lemma}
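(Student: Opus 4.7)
The proof will closely mirror the structure of Lemma~\ref{lem:good2}, adapted to the new parameter regime where the number of good variables we produce is a constant fraction of $n$ rather than sub-polynomial. The plan is to single out a set $L$ of candidate ``large-index'' variables and show via a union bound that only a tiny fraction of them end up bad. Concretely, I would take $L = \{x_i : i > 2n/t\}$, so that $|L| = n(1-2/t)$. Since the target bound factors as $n(1-2/t)(1-2/k) = |L|(1-2/k)$, it suffices to prove that with probability at least $1-2^{-\Omega(n(1-2/t))}$, at most $2|L|/k = 2n(1-2/t)/k$ of the variables in $L$ are bad with respect to $\sigma$.

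For $x_i \in L$, i.e., $i \geq 2n/t$, the standard falling-factorial estimate used in the proofs of Lemmas~\ref{lem:good} and~\ref{lem:good2} gives
\[
p_i \;=\; \frac{\binom{i-1}{k-1}}{\binom{n}{k}(2^k-1)} \;\geq\; \frac{1}{2}\,(i/n)^{k-1}\,\frac{k}{n\,2^k} \;\geq\; \frac{k}{4n\,t^{k-1}}.
\]
Crucially, the sets of clauses that make different variables good are pairwise disjoint (the good-making variable is the unique maximum-index variable in its clause), so for any $B \subseteq L$ the probability that a single random clause (drawn uniformly from clauses satisfying $\sigma$) makes some $x_i \in B$ good is $\sum_{i \in B} p_i \geq |B|\,k/(4n\,t^{k-1})$. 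Fix any $B \subseteq L$ with $|B| = 2n(1-2/t)/k$; then the probability that all $m \geq t^k n$ clauses avoid $B$ entirely is at most $\exp(-m \sum_{i \in B} p_i) \leq \exp(-t\,n(1-2/t)/2)$.

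The final step is a union bound over the identity of the bad set. If $|L_b| \geq 2|L|/k$, then some $B \subseteq L$ of size $2n(1-2/t)/k$ is contained in $L_b$; the number of such candidates is at most $\binom{n(1-2/t)}{2n(1-2/t)/k} \leq 2^{n(1-2/t)\,H(2/k)} = 2^{O(n(1-2/t)\log k / k)}$. Combining,
\[
\Pr[|L_b| \geq 2|L|/k] \;\leq\; 2^{O(n(1-2/t)\log k / k)} \cdot 2^{-t\,n(1-2/t)/(2\ln 2)} \;=\; 2^{-\Omega(n(1-2/t))},
\]
since for constant $t>2$ and sufficiently large $k$ the deviation rate $t/(2\ln 2)$ is an $\Omega(1)$ constant while the entropy cost $H(2/k) = O(\log k/k)$ is $o(1)$. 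The main potential obstacle, beyond routine bookkeeping, is only verifying that the slack between these two quantities is indeed $\Omega(1)$ for the given $t$ and $k$; everything else is a direct adaptation of the argument in Lemma~\ref{lem:good2} to the regime where $|L|$ is linear in $n$.
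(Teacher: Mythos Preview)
Your proposal is correct and follows essentially the same route as the paper: both take $L$ to be the last $n(1-2/t)$ variables, lower-bound the per-clause probability of turning a fixed $x_i\in L$ good by $\Theta(k/(n\,t^{k-1}))$, exploit disjointness of the good-making clause families, and finish with a union bound over candidate bad (resp.\ good) sets beaten by the $e^{-\Theta(n(1-2/t))}$ deviation term. The only cosmetic differences are that you union-bound over bad sets of a single size (using the entropy estimate $2^{|L|H(2/k)}$) while the paper sums over good sets of all small sizes and uses the cruder $2^{|L|}$; both slack terms are dominated by the main exponential.
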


\begin{proof}
In this proof by good/bad variables we will mean good/bad variables wrt $\sigma$.

Let $F \in_r P(n,k,m,\sigma)$ and let 
$L$ be the last (when sorted by index) $nz$ variables where $z = 1-2/t$. Let $L_g, L_b$ be the good and bad variables respectively, with respect to $\sigma$, among $L$. Let $E$ denote the event that $|L_b| > \gamma nz$, where $\gamma = 2/k$. Our goal is to prove a strong upper bound on the probability that $E$ occurs. For any $x_i \in L$, we have that, $i \geq n(1-z)$. Suppose clause $C$ is good with respect to $x_i \in L_b$, then we know that $C$ does not occur in $F$. Next, we will lower bound the probability of such a clause occurring with respect to a fixed variable $x_i \in L$. Recall that in planted $k$-SAT each clause is drawn uniformly at random from the set of all clauses which satisfy $\sigma$. So we get that,

\begin{align*}
&\Pr[\text{$C$ is good with respect to $x_i \in L$}]\\
&= \frac{\text{Number of clauses which will make $x_i \in L$ good}}{\text{Total number of clauses which satisfy $\sigma$}}\\
&= \frac{\binom{i-1}{k-1}}{\binom{n}{k}(2^k-1)}\\ 
&\geq \frac{1}{2}\left(\frac{i}{n}\right)^k\frac{k}{2^kn} \hspace{10pt} \text{[As $i \geq n(1-z) = \Omega(n)$]}\\ 
&\geq \frac{1}{2}\left(1-z\right)^k\frac{k}{2^kn} \hspace{10pt} \text{[As $i \geq n(1-z)$]}\\ 
&= \frac{k\left(1-z\right)^k}{2^{k+1}n}
\end{align*}

If $E$ is true, then $|L_b| > \gamma nz$. So the probability of sampling a clause which make some variable $x_i \in L_b$ good is $\geq \frac{\gamma kz(1-z)^k}{2^{k+1}}$ as the set of clauses which make different variables good are disjoint. We will now upper bound the probability of $E$ occurring.

\begin{align*}
\Pr[E] &\leq \sum_{i=1}^{nz(1-\gamma)} \Pr[\text{Exactly $i$ good variables among the last $nz$ variables}]\\
&\leq \sum_{i=1}^{ nz(1-\gamma)} \binom{nz}{i} \left(1-\frac{\gamma kz\left(1-z\right)^k}{2^{k+1}}\right)^{m}\\
&\leq 2^{nz} \left(1-\frac{\gamma kz\left(1-z\right)^k}{2^{k+1}}\right)^{t^{k}n} \hspace{10pt} \text{[As $m > t^kn$]}\\
&\leq 2^{nz} \left(1-\frac{z\left(1-z\right)^k}{2^{k}}\right)^{t^{k}n} \hspace{10pt} \text{[$\gamma = 2/k$]}\\
&\leq 2^{n(1-2/t)} \left(1-\frac{(1-2/t)2^k}{t^k2^{k}}\right)^{t^{k}n} \hspace{10pt} \text{[Substituting value of $z$]}\\
&\leq 2^{n(1-2/t)} \left(1-\frac{(1-2/t)}{t^k}\right)^{t^{k}n}\\
&\leq 2^{n(1-2/t)} e^{-n(1-2/t)} \hspace{10pt} \text{[As $1-x \leq e^{-x}$ for $x > 0$]}\\
&\leq 2^{-\delta n(1-2/t)}
\end{align*}

for appropriately small but constant $\delta > 0$, which proves the lemma statement.

\end{proof}

\begin{theorem}\label{thm:planted-algo3} Given a planted $k$-SAT instance $F$ sampled from $P(n, k, m)$ with $ m \geq t^{k}n$ where $t > 2$ is a constant. Then we we can find a solution of $F$ in $2^{n(1-(1-2/t)(1-2/k))}\text{poly(n)}$ time with $1-2^{-\Omega(n(1-2/t))}$ probability (over the planted $k$-SAT distribution and the randomness of the algorithm).
\end{theorem}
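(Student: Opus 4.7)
The proof will follow the same template as Theorems~\ref{thm:planted-algo} and~\ref{thm:planted-algo2}, combining the structural bound on good variables from Lemma~\ref{lem:good3} with the standard PPZ success-probability analysis. First I would invoke Lemma~\ref{lem:good3}: with probability at least $1-q$ where $q = 2^{-\Omega(n(1-2/t))}$, the random instance $F \in_r P(n,k,m,\sigma)$ has at least $g := n(1-2/t)(1-2/k)$ good variables with respect to the planted solution $\sigma$. Call this event $\mathcal{G}$.

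Next, I would recall the basic PPZ guarantee already used in the proofs of the earlier theorems: once all bad variables are guessed correctly, every good variable is forced to its correct value by the unit-clause rule during Simple-PPZ. Therefore, conditioned on $\mathcal{G}$, a single run of Simple-PPZ returns the planted assignment $\sigma$ with probability at least
\[
2^{-(n-g)} \;=\; 2^{-n\left(1-(1-2/t)(1-2/k)\right)}.
\]
Running Simple-PPZ independently $\mathrm{poly}(n)\cdot 2^{n(1-(1-2/t)(1-2/k))}$ times and outputting any satisfying assignment found therefore succeeds with probability at least $1 - 2^{-n}$ whenever $\mathcal{G}$ holds, and each run of Simple-PPZ is polynomial time, so the total running time is $2^{n(1-(1-2/t)(1-2/k))}\mathrm{poly}(n)$ as claimed.

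Finally, I would combine the two failure modes by a union bound: the algorithm fails either if $\mathcal{G}$ does not occur, which happens with probability at most $q = 2^{-\Omega(n(1-2/t))}$, or if $\mathcal{G}$ occurs but none of the repetitions recovers $\sigma$, which has probability at most $2^{-n}$. The total error is $q + 2^{-n} = 2^{-\Omega(n(1-2/t))}$, matching the theorem statement.

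There is essentially no hard step here — the work is all in Lemma~\ref{lem:good3}, which has already been established. The only thing to be mildly careful about is checking that the exponent $g/n = (1-2/t)(1-2/k)$ produced by the lemma is exactly what we substitute into the $2^{-(n-g)}$ PPZ bound, so that the claimed runtime $2^{n(1-(1-2/t)(1-2/k))}\mathrm{poly}(n)$ and failure probability $2^{-\Omega(n(1-2/t))}$ line up. No modification to PPZ or to the good-variable definition is required beyond what was already used in Sections~\ref{sec:planted} and~\ref{sec:plantedlarge}.
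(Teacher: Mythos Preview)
Your proposal is correct and follows essentially the same approach as the paper's proof: invoke Lemma~\ref{lem:good3} to get $\geq n(1-2/t)(1-2/k)$ good variables with probability $1-2^{-\Omega(n(1-2/t))}$, use the PPZ single-run success bound $2^{-(n-g)}$, repeat $\text{poly}(n)\cdot 2^{n(1-(1-2/t)(1-2/k))}$ times, and union-bound the two failure modes.
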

\begin{proof} By Lemma~\ref{lem:good3} we know that with probability $\geq (1-p)$ for $p = 2^{-\Omega \left(n(1-2/t)\right)}$ number of good variables in $F$ is $\geq n(1-2/t)(1-2/k)$ with respect to the hidden planted solution $\sigma$. For such instances one run of the PPZ algorithm will output $\sigma$ with probability $2^{-n(1-(1-2/t)(1-2/k))}$. Repeating the PPZ $\text{poly(n)}2^{n(1-(1-2/t)(1-2/k))}$ implies constant success probability of $\geq 1-p'$ for $p' = 2^{-n}$. The overall error probability is at most $p+p' \leq 2^{-\Omega \left(n(1-2/t)\right)}$.
\end{proof}

For using Theorem~\ref{thm:planted-algo3} to get algorithms for $R^+$ we need a more refined version of Lemma~\ref{lem:small-reduction}.

\begin{lemma}\label{lem:small-reduction2}
Suppose there exists an algorithm $A$ for planted k-SAT $P(n, k, m)$ for some $m\geq \alpha_{sat}n$ which finds a solution in time $2^{n(1-f(k))}$ and with probability $p$. Then given a random $k$-SAT instance sampled from $R^{+}(n, k, m)$ we can find a satisfiable solution in $2^{n(1-f(k))}$ time with $1-(1-p)2^{O(n/2^k)}$ probability.

\end{lemma}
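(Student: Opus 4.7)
The plan is to follow the proof of Lemma~\ref{lem:small-reduction} essentially verbatim, swapping out only the bound that was used on the expected number of solutions. Given a random instance $F \in_r R^+(n,k,m)$, the algorithm is simply to run $A$ on $F$; since $A$ only ever outputs genuine satisfying assignments, the only way we fail is when $F$ is satisfiable (which it always is, being drawn from $R^+$) and $A$ does not return an assignment. So we need to upper bound this failure probability.

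I would begin with the same union bound over potential solutions $\sigma \in \{0,1\}^n$ used in Lemma~\ref{lem:small-reduction}:
\begin{align*}
\Pr_{F, A}[A\text{ fails}] &\leq \sum_{\sigma} \Pr_{F \in R^+}[\sigma \text{ satisfies } F] \cdot \Pr_{F \in P(n,k,m,\sigma), A}[A \text{ fails}],
\end{align*}
using the fact that $R^+(n,k,m)$ conditioned on $\sigma$ being a solution is exactly $P(n,k,m,\sigma)$. Writing $s := E_{F \in R^+}[Z]$ for the expected number of solutions, we have $\Pr_{F \in R^+}[\sigma \text{ satisfies } F] = s/2^n$, and since $P(n,k,m)$ is the mixture of $P(n,k,m,\sigma)$ over uniform $\sigma$, the sum collapses to $s \cdot \Pr_{F \in P(n,k,m), A}[A \text{ fails}] \leq s(1-p)$.

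The one substantive change from Lemma~\ref{lem:small-reduction} is the bound on $s$: Lemma~\ref{lem:expectedRplus} gave control in the sub-threshold regime, but here we are at or above the satisfiability threshold. Since $m \geq \alpha_{sat} n \geq (\alpha_{sat}-1)n$, I can instead invoke Lemma~\ref{lem:expectedRplus2}, which gives $s \leq 2^{O(n/2^k)}$. Substituting, the failure probability is at most $(1-p)\,2^{O(n/2^k)}$, matching the stated success probability of $1 - (1-p)\,2^{O(n/2^k)}$. The running time is that of a single invocation of $A$, namely $2^{n(1-f(k))}$.

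I do not expect any real obstacle here: the structural identity relating $R^+$-conditioned-on-$\sigma$ and $P(n,k,m,\sigma)$ is already established, and the only new quantitative input is the above-threshold upper bound on $s$ provided by Lemma~\ref{lem:expectedRplus2}. The role of this lemma is as a drop-in replacement for Lemma~\ref{lem:small-reduction} that is useful precisely when the assumed planted-SAT algorithm succeeds with probability too close to $1$ to absorb a factor of $2^{nf(k)/2}$ but where $s$ is nevertheless subexponential, so that the smaller factor $2^{O(n/2^k)}$ suffices.
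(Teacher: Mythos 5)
Your proposal is correct and follows essentially the same approach as the paper: run $A$ once, bound the failure probability by a union bound over candidate solutions $\sigma$, identify $R^+$ conditioned on $\sigma$ being a solution with $P(n,k,m,\sigma)$, collapse the sum to $s\cdot\Pr_{F\in P(n,k,m),A}[A\text{ fails}]$, and then invoke Lemma~\ref{lem:expectedRplus2} (valid since $m\geq\alpha_{sat}n\geq(\alpha_{sat}-1)n$) to bound $s\leq 2^{O(n/2^k)}$. This is exactly the paper's argument.
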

\begin{proof}
Let $F$ be sampled from $R^+(n, k, m)$, let $Z$ denote the number of solutions and $s$ its expected value. As $m\geq \alpha_{sat}n$ Lemma~\ref{lem:expectedRplus2} implies that $s \leq 2^{O(n/2^k)}$.

Now we will just run Algorithm $A$. Note that if Algorithm $A$ gives a solution it is correct hence we can only have error when the formula is satisfiable but algorithm $A$ does not return a solution.We will now upper bound the probability of $A$ making an error.
\begin{align*}
&\Pr_{F \in R^+(n, k, m), A}[\text{A does not return a solution}]\\ \leq& \sum_{\sigma \in \{0, 1\}^n} \Pr_{F\in R^+(n, k, m), A}[\text{$\sigma$ satisfies F but $A$ does not return a solution}]\\
\leq& \sum_{\sigma \in \{0, 1\}^n} \Pr_{F\in R^+(n, k, m), A}[\text{$A$ does not return a solution} \mid \text{$\sigma$ satisfies F}]\Pr_{F\in R^+(n, k, m)}[\text{$\sigma$ satisfies F}]\\
\leq& \sum_{\sigma \in \{0, 1\}^n} \Pr_{F\in P(n, k, m, \sigma), A}[\text{$A$ does not return a solution}]\Pr_{F\in R^+(n, k, m)}[\text{$\sigma$ satisfies F}]
\end{align*}
where the last inequality used the fact (refer) that $R^+(n, k ,m)$ conditioned on having $\sigma$ as a solution is exactly $P(n, k, m, \sigma)$.
Now note that $\Pr_{F\in R^+(n, k, m)}[\text{$\sigma$ satisfies F}] = s/2^n$ and $P(n, k, m)$ is just $P(n, k, m, \sigma)$ where $\sigma$ is sampled uniformly from $\{0, 1\}^n$. Hence the expression simplifies to 

$$= \frac{s}{2^n} (2^n\Pr_{F\in P(n, k, m), A}[\text{$A$ does not return a solution}]) = s\Pr_{F\in P(n, k, m), A}[\text{$A$ does not return a solution}]
$$
 As $s \leq 2^{O(n/2^k)}$ the error probability is $\leq 2^{O(n/2^k)}(1-p)$.
\end{proof}

\begin{theorem}\label{thm:random-algo3} Given a random $k$-SAT instance $F$ sampled from $R^+(n, k, m)$ with $ m \geq t^{k}n$ where $t > 2$ is a constant. Then we we can find a solution of $F$ in $2^{n(1-(1-2/t)(1-2/k))}\text{poly(n)}$ time with $1-2^{-\Omega(n(1-2/t))}$ probability (over the planted $k$-SAT distribution and the randomness of the algorithm).
\end{theorem}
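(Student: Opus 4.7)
The plan is to obtain Theorem~\ref{thm:random-algo3} by directly composing Theorem~\ref{thm:planted-algo3} with the reduction in Lemma~\ref{lem:small-reduction2}, exactly paralleling how Theorem~\ref{thm:random-algo2} was derived from Theorem~\ref{thm:planted-algo2} and Lemma~\ref{lem:small-reduction}. Since $m \geq t^k n \geq \alpha_{sat} n$ for $t > 2$ and $k$ large, Lemma~\ref{lem:small-reduction2} applies: it takes a planted algorithm $A$ with success probability $p$ and running time $T$, and produces an algorithm for $R^+(n,k,m)$ with the same running time $T$ and error probability at most $(1-p)\cdot 2^{O(n/2^k)}$.

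First I would invoke Theorem~\ref{thm:planted-algo3} to get a planted $k$-SAT algorithm running in time $2^{n(1-(1-2/t)(1-2/k))}\text{poly}(n)$ with success probability $p \geq 1 - 2^{-\Omega(n(1-2/t))}$. Plugging this into Lemma~\ref{lem:small-reduction2}, the running time is preserved, and the $R^+$ error probability is bounded by
\[
(1-p)\cdot 2^{O(n/2^k)} \leq 2^{-\Omega(n(1-2/t))}\cdot 2^{O(n/2^k)}.
\]
Since $t > 2$ is a fixed constant, $1-2/t$ is a positive constant, while $1/2^k$ is negligible compared to it for large $k$; hence the product is still $2^{-\Omega(n(1-2/t))}$.

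The only thing to verify carefully is that the hypothesis of Lemma~\ref{lem:small-reduction2} is satisfied, namely $m \geq \alpha_{sat} n$. For $t > 2$ constant and $k$ sufficiently large we have $t^k \geq 2^k \ln 2 = \alpha_{sat} + O(1)$, so $m \geq t^k n \geq \alpha_{sat} n$ as required, allowing the use of Lemma~\ref{lem:expectedRplus2} inside the reduction to control the expected number of solutions. There is no genuine obstacle here; the statement is a clean corollary of the two earlier results, and the proof reduces to a one-line composition together with the routine check that the $2^{O(n/2^k)}$ slack absorbed by the reduction does not eat into the $2^{-\Omega(n(1-2/t))}$ error bound.
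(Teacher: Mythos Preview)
Your proposal is correct and follows exactly the paper's approach: compose Theorem~\ref{thm:planted-algo3} with the reduction of Lemma~\ref{lem:small-reduction2}, then observe that the $2^{O(n/2^k)}$ blow-up in the error probability is absorbed by the $2^{-\Omega(n(1-2/t))}$ term since $1-2/t$ is a positive constant. Your additional check that $m \geq t^k n \geq \alpha_{sat} n$ for $t>2$ and large $k$ is the one hypothesis verification the paper leaves implicit.
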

\begin{proof}
The algorithm in Theorem~\ref{thm:planted-algo3} and the reduction in Lemma~\ref{lem:small-reduction2} imply that we can find a solution of $F$ in $2^{n(1-(1-2/t)(1-2/k))}\text{poly(n)}$ time with $1-2^{O(n/2^k)}2^{-\Omega(n(1-2/t))} = 1-2^{-\Omega(n(1-2/t))}$ probability. 
\end{proof}

\noindent\textbf{Acknowledgments.} The author would like to thank Ryan Williams for many invaluable discussions on the problem and help in writing the paper, Andrea Lincoln for discussions on Random $k$-SAT and Mitali Bafna for helpful comments on the paper.

\bibliographystyle{alpha}
\bibliography{references}
\end{document}